\tikzstyle{every picture}=[
\def\ostar{\textcircled{$\star$}}
\def\sc{\mathrm{sc}}
\def\mon{\mathrm{Mon}}
\newcommand{\IntEnt}[1]{\llbracket #1\rrbracket}
\newtheorem{definition}{Definition}
\newtheorem{lemma}{Lemma}
\newtheorem{example}{Example}
\newtheorem{proposition}{Proposition}
\newtheorem{theorem}{Theorem}
\newtheorem{remark}{Remark}
\newtheorem{property}{Property}
\newtheorem{corollary}{Corollary}
\newenvironment{proof}{\textbf{Proof:}}{\hfill$\square$\newline}
\title{Algebraic and Combinatorial Tools for State Complexity : Application to the Star-Xor Problem}
\author{Pascal Caron
  \and Edwin Hamel-de-le-court
    \and Jean-Gabriel Luque
\institute{LITIS, Université de Rouen,\\ Avenue de l'Université,\\ 76801 Saint-\'Etienne du Rouvray Cedex,\\ France}
\email{\{Pascal.Caron, Edwin.Hamel-de-le-court, Jean-Gabriel.Luque\}@univ-rouen.fr}
}
\begin{document}
\maketitle
\begin{abstract}
We investigate the state complexity of the star of symmetrical differences using modifiers and monsters.
A monster is an automaton in which every function from states to states is represented by at least one letter. A modifier is a set of functions allowing one to transform a set of automata into one automaton.
These recent theoretical concepts allow one to find easily the desired  state complexity. We then exhibit a witness with a constant size alphabet. 
\end{abstract}

\section{Introduction}

The state complexity of a rational language is the size of its minimal automaton and the state complexity of a rational operation is the maximal one of those languages obtained by applying this operation onto languages of fixed state complexities. 

The classical  approach is to compute an upper bound  and to provide a witness, that is a specific example reaching the bound which is then the desired state complexity.

Since the $70s$, the state complexity of
numerous unary and binary operations has been computed. See, for example, \cite{Dom02,GMRY17,JJS05,Jir05,JO08,Yu01a} for a survey of the subject. More recently, the state complexity of combinations of operations has also been studied. In most  cases the result is not simply the mathematical composition of the individual complexities and studies lead to interesting situations. Examples can be found in \cite{CGKY11,GSY08,JO11,SSY07}.

In some cases, the classical method has to be enhanced by two independent approaches. The first one consists in describing   states  by  combinatorial objects. Thus the upper bound is computed using combinatorial tools. For instance, in \cite{CLMP15}, the states are represented by tableaux representing boolean matrices and an upper bound for the catenation of symmetrical difference is given. These combinatorial objects will be used to compute an upper bound for the Kleene star of symmetrical difference.
The second one is an algebraic method consisting in 
 %
%
 building a witness for a certain class of rational operations by searching in a set of automata with as many  transition functions as possible. This method has the advantage of being applied to a large class of operations, but has the drawback of giving witnesses that have alphabets of non-constant size. Witnesses with small alphabets are indeed favoured in this area of research when they can be found, as evidenced by several studies (\cite{CLP16,CLP17}).
This approach has been described independently by Caron \textit{et al.} in \cite{CHLP18}  as the monster approach and by Davies in \cite{Dav18} as the OLPA (One Letter Per Action) approach but was implicitly present in older papers like \cite{BJLRS16,DO09}.

In this paper, we illustrate these approaches to find the state complexity of the star of symmetrical difference.  Furthermore, we improve the witness found by drastically reducing the size of its alphabet to a constant size.

The paper is organized as follows. Section \ref{sect-prel}  gives definitions and notations about automata and combinatorics. In Section  \ref{sec-mod}, we recall the monster approach : we define  \emph{modifiers}, \emph{monsters}, and  give  some properties of these structures related to state complexity.  In Section \ref{sec-sc},  the state complexity of star of symmetrical difference is computed. Hence, in  Section \ref{sec-borne}, we find witnesses for this operation with an alphabet size of $17$.

\section{Preliminaries}\label{sect-prel}
\subsection{Operations over sets}
The \emph{cardinality} of a finite set $E$ is denoted by $\#E$,  the \emph{set of subsets} of  $E$ is denoted by $2^E$ and the \emph{set of mappings} of $E$ into itself is denoted by $E^E$. The \emph{symmetric difference} of two sets $E_1$ and $E_2$ is denoted by $\oplus$ and defined by $E_1\oplus E_2=(E_1\cup E_2)\setminus (E_1\cap E_2)$. For any positive integer $n$, let us denote $\{0,\ldots, n-1\}$ by $\llbracket n\rrbracket$. $\mathds{1}$ denotes the identity mapping, the set of which depends on context.
\subsection{Languages and automata}
Let $\Sigma$ denote a finite alphabet. A \emph{word} $w$ over  $\Sigma$ is a finite sequence of symbols of $\Sigma$. The \emph{length} of  $w$, denoted by $|w|$, is the number of occurrences of symbols of $\Sigma$ in $w$. For  $a\in \Sigma$, we denote by $|w|_a$ the number of occurrences of  $a$ in $w$.  The set of all finite words over $\Sigma$ is denoted by $\Sigma ^*$. A \emph{language} is a subset of $\Sigma^*$.

A \emph{complete and deterministic finite automaton} (DFA) is a $5$-tuple $A=(\Sigma,Q,i,F,\delta)$ where $\Sigma$ is the input alphabet, $Q$ is a finite set of states, $i\in Q$ is the initial state, $F\subseteq Q$ is the set of final states and $\delta$ is the transition function from  $Q\times \Sigma$ to $Q$ extended in a natural way from $Q\times \Sigma^*$ to $Q$. The cardinality of $A$ is the cardinality of its set of states, \emph{i.e.} $\#A=\#Q$. We will often use $\IntEnt{n}$ for some $n\in\mathbb{N}$ as the set of states for DFAs.


Let $A=(\Sigma,Q,i,F,\delta)$ be a DFA. A word $w\in \Sigma ^*$ is \emph{recognized} by the DFA $A$ if $\delta(i,w)\in F$. The \emph{language recognized} by a DFA $A$ is the set $\mathrm L(A)$ of words recognized by $A$. Two DFAs are said to be \emph{equivalent} if they recognize the same language.

For any word $w$, we denote by $\delta^w$ the function $q\rightarrow\delta(q,w)$. Two states $q_1,q_2$ of $D$ are \emph{equivalent} if for any word $w$ of $\Sigma^*$, $\delta(q_1, w)\in F$ if and only if $\delta(q_2, w)\in F$. This equivalence relation is called the \emph{Nerode equivalence} and is denoted by $q_1\sim_{Ner} q_2$. If two states are not equivalent, then they are called \emph{distinguishable}.

A state $q$ is \emph{accessible} in a DFA  if there exists a word $w\in \Sigma ^*$ such that $q=\delta(i,w)$. A DFA is  \emph{minimal} if there does not exist any equivalent  DFA  with less states and it is well known that for any DFA, there exists a unique minimal equivalent one (\cite{HU79}). Such a minimal DFA  can be  obtained from $D$ by computing $\widehat A_{/\sim}=(\Sigma,Q/\sim,[i],F/\sim,\delta_{\sim})$ where $\widehat A$ is the accessible part of $A$, and where, for any $q\in Q$, $[q]$ is the $\sim$-class of the state $q$ and satisfies the property  $\delta_{\sim}([q],a)=[\delta(q,a)]$, for any $a\in \Sigma$. The number of its states is denoted by $\#_{Min}(A)$. In a minimal DFA, any two distinct states are pairwise distinguishable.

Let $L$ be a regular language defined over an alphabet $\Sigma$. We denote by $L^*$ $\{w=u_1\cdots u_n\mid  u_i\in L \land n\in\mathbb{N}\}$.

The syntactic semigroup of $L$ is the semigroup generated by the transition functions of all letters of the minimal DFA of $L$.
\subsection{State complexity}
A \emph{unary regular operation} is a function from regular languages into regular languages of $\Sigma$. A \emph{$k$-ary regular operation} over the alphabet $\Sigma$ is a function from the set of $k$-tuples of regular languages of $\Sigma$ into regular languages of $\Sigma$.\\
The state complexity of a regular language $L$ denoted by $\sc(L)$ is the number of states of its minimal DFA. This notion extends to regular operations: the state complexity of a unary regular operation $\otimes$ is the function $\sc_{\otimes}$ such that, for all $n\in\mathbb{N\setminus\{0\}}$, $\sc_{\otimes}(n)$ is the maximum of all the state complexities of $\otimes(L)$ when $L$ is of state complexity $n$, \emph{i.e.} $\sc_{\otimes}(n)=\max\{\mathrm{sc}(\otimes(L)) | \sc(L) = n\}$.

This can be generalized, and the state complexity of a $k$-ary operation $\otimes$ is the $k$-ary function $\sc_\otimes$ such that, for all $(n_1,\ldots,n_k)\in (\mathbb N^*)^k$,
\begin{equation}
  \sc_\otimes(n_1,\ldots,n_k)=\max\{\sc(\otimes(L_1,\ldots,L_k))\mid\text{ for all }i\in\{1,\ldots,k\}, \sc(L_i)=n_i\}.
  \end{equation}
Then, a witness for $\otimes$ is a a way to assign to each $(n_1,\ldots,n_k)$, assumed sufficiently big, a k-tuple of languages $(L_1,\ldots,L_k)$ with $\mathrm{sc}(L_i)=n_i$, for all $i\in\{1,\ldots,k\}$, satisfying $\sc_\otimes(n_1,\ldots,n_k)=\sc(\otimes(L_1,\ldots,L_k))$.
\subsection{Morphisms}
Let $\Sigma$ and $\Gamma$ be two alphabets. A morphism is a function $\phi$ from $\Sigma^*$ to $\Gamma^*$ such that, for all $w,v\in\Sigma^*$, $\phi(wv)=\phi(w)\phi(v)$. Notice that $\phi$ is completely defined by its value on letters.

Let $L$ be a regular language over alphabet $\Sigma$ recognized by the DFA $A=(\Sigma,Q,i,F,\delta)$ and let $\phi$ be a morphism from $\Gamma^*$ to $\Sigma^*$. Then, $\phi^{-1}(L)$ is the regular language recognized by the DFA $B=(\Gamma,Q,i,F,\delta')$ where, for all $a\in\Gamma$ and $q\in Q$, $\delta'(q,a)=\delta(q,\phi(a))$. Therefore, note that we have
\begin{property}\label{prop-scmorph}
  Let $L$ be a regular language and $\phi$ be a morphism. We have $\mathrm{sc}(\phi^{-1}(L))\leq\mathrm{sc}(L)$.
\end{property}
We say that a morphism $\phi$ is \emph{$1$-uniform} if the image by $\phi$ of any letter is a letter. In other words, a $1$-uniform morphism is a (not necessarily injective) renaming of the letters and the only complexity of the mapping stems from mapping $a$ and $b$ to the same image, i.e., $\phi(a) = \phi(b)$.
\section{Monsters and state complexity}\label{sec-mod}
In \cite{Brz13}, Brzozowski gives a series of properties that would make a language $L_n$ of state complexity $n$ sufficiently complex to be a good candidate for constructing witnesses for numerous classical rational operations. One of these properties is that the size of the syntactic semigroup is $n^n$, which means that each transformation of the minimal DFA of $L_n$ can be associated to a transformation by some non-empty word. This upper bound is reached when the set of transition functions of the DFA is exactly the set of transformations from state to state. We thus consider the set of transformations of $\IntEnt{n}$ as an alphabet where each letter is simply named by the transition function it defines. This leads to the following definition :
\begin{definition}
A $1$-monster  is an automaton
$\mon_{n}^{F}=(\Sigma,\IntEnt{n},0, F,\delta)$ defined by
\begin{itemize}
\item   the  alphabet $\Sigma=\IntEnt{n}^{\IntEnt{n}}$, 
\item the set of states  $\IntEnt{n}$,
\item the initial state $0$,
\item the set of final states $F$,
\item the transition function $\delta$  defined for any $a\in \Sigma$ by $\delta(q,a)=a(q)$.
\end{itemize}
The language recognized by a $1$-monster DFA is called a \emph{$1$-monster language}.
\end{definition}
\begin{example}\label{ex-mon}
  The $1$-monster $\mon_2^{\{1\}}$ is
  \begin{figure}[H]
  \centering
  \begin{tikzpicture}[node distance=2cm]
    \node[state,initial](p0){$0$};
    \node[state,accepting](p1) at (4,0) {$1$};
    \path[->]
    (p0)edge[loop ] node [swap]{$[01],[00]$} (p0)
    (p0)edge[bend left] node {$[11],[10]$} (p1)
    (p1)edge[loop ] node [swap]{$[01],[11]$} (p1)
    (p1)edge[bend left] node{$[00],[10]$}(p0);
  \end{tikzpicture}
  \end{figure}
  where, for all $i,j\in\{0,1\}$, the label $[ij]$ denotes the transformation sending $0$ to $i$ and $1$ to $j$, which is also a letter in the DFA above.
\end{example}

Let us notice that some families of $1$-monster languages are witnesses for the Star and Reverse operations (\cite{CHLP18}). The following claim is easy to prove and captures a universality-like property of $1$-monster languages:
\begin{property}\label{prop-res}
  Let $L$ be any regular language recognized by a DFA $A=(\Sigma,\IntEnt{n},0, F,\delta)$. The language $L$ is the preimage of $\mathrm L(\mon_n^F)$ by the $1$-uniform morphism $\phi$ such that, for all $a\in\Sigma$, $\phi(a)=\delta^a$, i.e.
  \begin{equation}
    L=\phi^{-1}(\mathrm{L}(\mon_n^F)).
  \end{equation}
\end{property}

This is an important and handy property that we should keep in mind. We call it the \emph{restriction-renaming} property.

We can wonder whether we can extend the notions above to provide witnesses for $k$-ary operators. In the unary case, the alphabet of a monster is the set of all possible transformations we can apply on the states. In the same mindset, a $k$-monster DFA is a $k$-tuple of DFAs, and its construction must involve the set of $k$-tuples of transformations as an alphabet. Indeed, the alphabet of a $k$-ary monster has to encode all the transformations acting on each set of states independently one from the others. This leads to the following definition :
 \begin{definition}\label{def-mon}
   A $k$-monster  is a $k$-tuple of automata $\mon_{n_1,\ldots, n_k}^{F_1,\ldots, F_k}=(\mathds{M}_1,\ldots, \mathds{M}_k)$ where\\
   $\mathds{M}_j=(\Sigma,\IntEnt{n_j},0, F_j,\delta_j)$ for $j\in \{1,k\}$ is defined by
\begin{itemize}
\item the common alphabet  $\Sigma=\IntEnt{n_1}^{\IntEnt{n_1}}\times \ldots \times \IntEnt{n_k}^{\IntEnt{n_k}}$, 
\item the set of states  $\IntEnt{n_j}$,
\item the initial state  $0$,
\item the set of final states  $F_j$,
\item the transition function $\delta_j$  defined for any $(a_1,\ldots, a_k)\in \Sigma$ by $\delta_j(q,(a_1,\ldots, a_k))={a_j}(q)$.
\end{itemize}
A $k$-tuple of languages $(L_1,\ldots,L_k)$ is called a \emph{monster $k$-language} if there exists a $k$-monster \\
$(\mathds{M}_1,\ldots,\mathds{M}_k)$ such that $(L_1,\ldots,L_k)=(\mathrm L(\mathds{M}_1),\ldots,\mathrm L(\mathds{M}_k))$.
\end{definition}
\begin{remark}\label{remark-min}
  When $F_j$ is different from $\emptyset$ and $Q_j$, $\mathds{M}_j$ is minimal.
\end{remark}
Definition \ref{def-mon} allows us to extend the restriction-renaming property in a way that is still easy to check.
\begin{property}
  Let $(L_1,\ldots,L_k)$ be a $k$-tuple of regular languages over the same alphabet $\Sigma$. We assume that each $L_j$ is recognized by the DFA $A_j=(\Sigma,\IntEnt{n_j},0,F_j,\delta_j)$. Let $\mon_{n_1,\ldots, n_k}^{F_1,\ldots, F_k}=(\mathds{M}_1,\ldots, \mathds{M}_k)$. For all $j\in\{1,\ldots,k\}$, the language $L_j$ is the preimage of $\mathrm{L}(\mathds{M}_j)$ by the $1$-uniform morphism $\phi$ such that, for all $a\in\Sigma$, $\phi(a)=(\delta_1^a,\ldots,\delta_k^a)$, i.e.
  \begin{equation}
    (L_1,\ldots,L_k)=(\phi^{-1}(\mathrm{L}(\mathds{M}_1)),\ldots,\phi^{-1}(\mathrm{L}(\mathds{M}_k))).
  \end{equation}
\end{property}
It has been shown that some families of $2$-monsters are witnesses for binary boolean operations and for the catenation operation \cite{CHLP18}. Many papers concerning state complexity actually use monsters as witnesses without naming them (e.g. \cite{BJLRS16}).
Therefore, a natural question arises : can we define a simple class of rational operations for which monsters are always witnesses ? This class should ideally encompass some classical regular operations, in particular the operations studied in the papers cited above. In the next section, we define objects that allow us to answer this question.
\section{Modifiers}
We first describe a class of regular operations for which monsters are always witnesses in the unary case. Once again, the restriction-renaming property comes in handy and gives us the intuition we need. We call \emph{$1$-uniform} any unary regular operation $\otimes$ that commutes with any $1$-uniform morphism, \emph{i.e.} for every regular language $L$ and every $1$-uniform morphism $\phi$, $\otimes(\phi^{-1}(L))=\phi^{-1}(\otimes(L))$. For example, it is proven in \cite{Dav18} that the Kleene star and the reverse are $1$-uniform. Suppose now that $\otimes$ is a unary $1$-uniform operation. Then, if $L$ is a regular language, $A=(\Sigma,\IntEnt{n},0,F,\delta)$ its minimal DFA, and $\phi$ the $1$-uniform morphism sending any letter of $\Sigma$ into its associated transition function in $A$, we have
\begin{equation}
  \otimes(L)=\otimes(\phi^{-1}(\mathrm{L}(\mon_{n}^F))=\phi^{-1}(\otimes(\mathrm{L}(\mon_{n}^F))).
\end{equation}
It follows that $\mathrm{sc}(\otimes(L))= \mathrm{sc}(\phi^{-1}(\otimes(\mathrm{L}(\mon_{n}^F))))\leq \mathrm{sc}(\otimes(\mathrm{L}(\mon_{n}^F)))$ by Property \ref{prop-scmorph}. In addition, Remark \ref{remark-min} implies that $\mathrm L(\mon_{n}^F)$ has the same state complexity as $L$. Therefore, we have
\begin{theorem}\label{th-mon}
  Any $1$-uniform operation admits a family of monster $1$-languages as a witness.
\end{theorem}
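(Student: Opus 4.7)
The plan is to formalize the chain of (in)equalities that the paragraph preceding the theorem already sketches, and to pin down how, for each $n$, one chooses the finality set $F$ so that the corresponding monster serves as a witness. The key ingredients are Property \ref{prop-scmorph} (preimage by a morphism cannot increase state complexity), Property \ref{prop-res} (the restriction--renaming identity $L=\phi^{-1}(\mathrm{L}(\mon_n^F))$), Remark \ref{remark-min} (minimality of $\mon_n^F$ whenever $F\notin\{\emptyset,\IntEnt{n}\}$), and the very definition of $1$-uniformity ($\otimes$ commuting with $\phi^{-1}$ for $\phi$ a $1$-uniform morphism).

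First, fix $n\geq 2$ and pick a regular language $L$ of state complexity $n$ that realizes the maximum, namely $\mathrm{sc}(\otimes(L))=\mathrm{sc}_\otimes(n)$ (such an $L$ exists by definition of $\mathrm{sc}_\otimes(n)$ as a maximum over the nonempty set of languages of state complexity $n$). Let $A=(\Sigma,\IntEnt{n},0,F,\delta)$ be the minimal DFA of $L$. Since $A$ is minimal and has $n\geq 2$ states, we must have $F\neq\emptyset$ and $F\neq\IntEnt{n}$, so by Remark \ref{remark-min} the monster $\mon_n^F$ is itself minimal; hence $\mathrm{sc}(\mathrm{L}(\mon_n^F))=n$. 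This is the candidate witness for $n$.

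Next, let $\phi$ be the $1$-uniform morphism defined by $\phi(a)=\delta^a$ for every $a\in\Sigma$. By Property \ref{prop-res}, $L=\phi^{-1}(\mathrm{L}(\mon_n^F))$. Applying $1$-uniformity of $\otimes$ to this identity yields
\begin{equation}
\otimes(L)=\otimes\bigl(\phi^{-1}(\mathrm{L}(\mon_n^F))\bigr)=\phi^{-1}\bigl(\otimes(\mathrm{L}(\mon_n^F))\bigr),
\end{equation}
and then Property \ref{prop-scmorph} gives
\begin{equation}
\mathrm{sc}_\otimes(n)=\mathrm{sc}(\otimes(L))=\mathrm{sc}\bigl(\phi^{-1}(\otimes(\mathrm{L}(\mon_n^F)))\bigr)\leq \mathrm{sc}\bigl(\otimes(\mathrm{L}(\mon_n^F))\bigr).
\end{equation}
Since the reverse inequality $\mathrm{sc}(\otimes(\mathrm{L}(\mon_n^F)))\leq \mathrm{sc}_\otimes(n)$ holds by definition of $\mathrm{sc}_\otimes$ (as $\mathrm{L}(\mon_n^F)$ has state complexity $n$), we obtain equality, which means that the family $\bigl(\mathrm{L}(\mon_n^{F_n})\bigr)_{n\geq 2}$, where $F_n$ is the final-state set of a maximizing minimal DFA at size $n$, is a witness family of monster $1$-languages, as required.

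There is no real obstacle, only a bookkeeping subtlety: one must take care that the monster chosen actually has state complexity $n$, which is precisely what Remark \ref{remark-min} guarantees once we rule out the two degenerate values of $F$; the assumption $n\geq 2$ is enough to secure this, and the small cases can be absorbed into the phrase ``sufficiently big'' already used in the definition of a witness.
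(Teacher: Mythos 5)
Your proposal is correct and follows essentially the same argument the paper gives in the paragraph preceding the theorem: restriction--renaming (Property \ref{prop-res}), commutation with the $1$-uniform morphism, Property \ref{prop-scmorph} to bound $\mathrm{sc}(\otimes(L))$ by $\mathrm{sc}(\otimes(\mathrm{L}(\mon_n^F)))$, and Remark \ref{remark-min} to ensure the monster language itself has state complexity $n$. Your extra bookkeeping (choosing a maximizing $L$, ruling out $F\in\{\emptyset,\IntEnt{n}\}$ via $n\geq 2$) only makes explicit what the paper leaves implicit.
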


We now introduce the second central concept of our paper. In many cases, to compute state complexities, it is easier to describe regular operations as constructions on DFAs. We would therefore like to find a class of operations on DFAs, that are naturally associated to $1$-uniform operations. Such an operation on DFAs needs to have some constraints that are described in the following definitions.
\begin{definition}
  The \emph{state configuration} of a DFA $A=(\Sigma,Q,i,F,\delta)$ is the triplet $(Q,i,F)$.
\end{definition}
\begin{definition}
  A \emph{$1$-modifier} is a unary operation on DFA $\mathfrak m$ that produces a DFA such that :
  \begin{itemize}
  \item For any DFA $A$, the alphabet of $\mathfrak m(A)$ is the same as the alphabet of $A$.
  \item For any DFA $A$, the state configuration of $\mathfrak m(A)$ depends only on the state configuration of the DFA $A$.
  \item For any DFA $A$ over the alphabet $\Sigma$, for any letter $a\in\Sigma$, the transition function of $a$ in $\mathfrak m(A)$ depends only on the state configuration of the DFA $A$ and on the transition function of $a$ in $A$.
  \end{itemize}
\end{definition}
\begin{example}{The star modifier.}\label{ex-star}
  For all DFA $A=(\Sigma,Q,i,F,\delta)$, define $\mathfrak{Star}(A)=(\Sigma,2^{Q},\emptyset,\{E| E\cap F \neq \emptyset\}\cup\{\emptyset\},\delta_1)$, where $\delta_1$ is as follows : for all $a\in\Sigma$,
  \[\delta_1^a(\emptyset)=\left\{\begin{array}{ll}\{\delta^a(i)\}\text{ if }\delta^a(i)\notin F \\
  \{\delta^a(i),i\}\mbox{ otherwise }
  \end{array}\right.
  \mbox{ and, for all } E\neq\emptyset,\;
  \delta_1^a(E)=\left\{\begin{array}{ll}\delta^a(E)\text{ if }\delta^a(E)\cap F=\emptyset \\
  \delta^a(E)\cup\{i\}\mbox{ otherwise }
  \end{array}\right.\]
\end{example}
The modifier $\mathfrak{Star}$ describes the classical construction on DFA associated to the Star operation on languages, \emph{i.e.} for all DFA $A$, $\mathrm L(A)^*=\mathrm L(\mathfrak{Star}(A))$.
\begin{example}
  If we apply the modifier $\mathfrak{Star}$ to the modifier $\mon_2^{\{1\}}$ described in Example \ref{ex-mon}, we obtain the DFA drawn in Figure \ref{Stmon}.
  \begin{figure}[H]
    \centering
    \begin{tikzpicture}[node distance=2cm]
      \node[state,initial](p1) at (0,4) {$\emptyset$};
      \node[state,accepting](p2) at (6,0) {$\{1\}$};
      \node[state](p0) at (0,0) {$\{0\}$};
      \node[state,accepting](p3) at (6,4) {$\{0,1\}$};
      \path[->]
      (p1)edge node [swap]{$[01],[00]$} (p0)
      (p1)edge node {$[11],[10]$} (p3)
      (p0)edge[loop left] node [swap]{$[01],[00]$}
      (p0)edge[bend left=15] node {$[11],[10]$} (p3)
      (p2)edge node {$[10],[00]$} (p0)
      (p2)edge node [swap]{$[11],[01]$} (p3)
      (p3)edge[loop right] node [swap]{$[10],[01],[11]$}
      (p3)edge[bend left=15] node {$[00]$} (p0);
    \end{tikzpicture}
  \end{figure}
  \captionof{figure}{$\mathfrak{Star}(\mon_2^{\{1\}})$}\label{Stmon}
  From this, one deduces the action of the modifier $\mathfrak{Star}$ on any DFA with two states. For instance, applying $\mathfrak{Star}$ to DFA $C$ (Figure \ref{C}) gives the DFA described in Figure \ref{St(C)}.
  \newline
  \begin{minipage}{.5\textwidth}
    \centering
    \resizebox{.8\textwidth}{!}{
  \begin{tikzpicture}[node distance=2cm]
    \node[state,initial](p0){$0$};
    \node[state,accepting](p1) at (4,0) {$1$};
    \path[->]
    (p0)edge[loop ] node [swap]{$a$} (p0)
    (p0)edge[bend left] node {$b$} (p1)
    (p1)edge[loop ] node [swap]{$a,b$} (p1);
  \end{tikzpicture}
  }
    \captionof{figure}{The DFA $C$}\label{C}
  \end{minipage}
  \begin{minipage}{.5\textwidth}
    \centering
    \resizebox{\textwidth}{!}{
    \begin{tikzpicture}[node distance=2cm]
      \node[state,initial](p1) at (0,4) {$\emptyset$};
      \node[state,accepting](p2) at (6,0) {$\{1\}$};
      \node[state](p0) at (0,0) {$\{0\}$};
      \node[state,accepting](p3) at (6,4) {$\{0,1\}$};
      \path[->]
      (p1)edge node [swap]{$a$} (p0)
      (p1)edge node {$b$} (p3)
      (p0)edge[loop left] node [swap]{$a$}
      (p0)edge[bend left=15] node {$b$} (p3)
      (p2)edge node [swap]{$a,b$} (p3)
      (p3)edge[loop right] node [swap]{$a,b$} (p3);
    \end{tikzpicture}
    }
    \captionof{figure}{$\mathfrak{Star}(C)$}\label{St(C)}
  \end{minipage}
  Remark that to apply $\mathfrak{Star}$ to $C$, we just take the subautomaton of $\mathfrak{Star}(\mon_2^{\{1\}})$ with letters being exactly the transition functions of letters in $C$, and rename its letters by the letters of $C$ of which they are the transition functions. The transition labeled by $b$ in Figure \ref{C} is first assimilated to the transition $[11]$  in $\mon_2^{\{1\}}$ (see Example \ref{ex-mon}). Hence, the transition labeled by $b$ in $\mathfrak{Star}(C)$ is the same as the transition labeled by $[11]$ in $\mathfrak{Star}(\mon_2^{\{1\}})$ (Figure \ref{Stmon}).
\end{example}
\begin{theorem}\label{the-eq}
  A regular unary operation $\otimes$ is $1$-uniform if and only if there exists a $1$-modifier $\mathfrak m$ such that for any regular language $L$ and any DFA $A$ recognizing $L$, $\otimes(L)=\mathrm L(\mathfrak m(A))$.
\end{theorem}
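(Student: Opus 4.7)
The plan is to prove the equivalence in two directions, with the forward direction requiring an explicit construction of the modifier from $\otimes$.

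For the implication ``modifier $\Rightarrow$ $1$-uniform'', I would fix a regular language $L$ over $\Sigma$ recognized by $A=(\Sigma,Q,i,F,\delta)$ and a $1$-uniform morphism $\phi:\Gamma^*\to\Sigma^*$, noting that $\phi^{-1}(L)$ is recognized by $B=(\Gamma,Q,i,F,\delta')$ with $\delta'^{a}=\delta^{\phi(a)}$. Because $A$ and $B$ share their state configuration, the three defining clauses of a $1$-modifier will force $\mathfrak m(B)$ to share its state configuration with $\mathfrak m(A)$, and the transition function of each letter $a\in\Gamma$ in $\mathfrak m(B)$ (which depends only on $(Q,i,F)$ and on $\delta'^{a}=\delta^{\phi(a)}$) will coincide with the transition function of $\phi(a)$ in $\mathfrak m(A)$. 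Therefore $\mathfrak m(B)$ will be exactly the preimage-automaton of $\mathfrak m(A)$ under the renaming $\phi$, which yields $\otimes(\phi^{-1}(L))=\mathrm L(\mathfrak m(B))=\phi^{-1}(\mathrm L(\mathfrak m(A)))=\phi^{-1}(\otimes(L))$, so $\otimes$ is $1$-uniform.

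For the converse, assuming $\otimes$ is $1$-uniform, I would build $\mathfrak m$ using the restriction-renaming property as follows. For each state configuration $c=(Q,i,F)$, I introduce the generalized monster $M_c=(Q^Q,Q,i,F,\gamma_c)$ with $\gamma_c(q,a)=a(q)$ (which coincides with $\mon_n^F$ when $Q=\IntEnt{n}$ and $i=0$), compute $\otimes(\mathrm L(M_c))$, and fix once and for all a DFA $D_c=(Q^Q,P_c,p_c,G_c,\varepsilon_c)$ recognizing it. Given any DFA $A=(\Sigma,Q,i,F,\delta)$ with state configuration $c$, I would set $\mathfrak m(A)=(\Sigma,P_c,p_c,G_c,\delta^{\mathfrak m}_A)$ with $\delta^{\mathfrak m}_A(q,a)=\varepsilon_c(q,\delta^{a})$. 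The three $1$-modifier conditions then become immediate: the alphabet of $\mathfrak m(A)$ is $\Sigma$, the state configuration $(P_c,p_c,G_c)$ depends only on $c$, and the transition function of $a$ in $\mathfrak m(A)$ depends only on $\varepsilon_c$ (itself determined by $c$) and on $\delta^{a}$.

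To verify $\mathrm L(\mathfrak m(A))=\otimes(\mathrm L(A))$, I would introduce the $1$-uniform morphism $\phi:a\mapsto\delta^{a}$: the natural extension of Property~\ref{prop-res} to $M_c$ gives $\mathrm L(A)=\phi^{-1}(\mathrm L(M_c))$, the $1$-uniformity of $\otimes$ then yields $\otimes(\mathrm L(A))=\phi^{-1}(\otimes(\mathrm L(M_c)))=\phi^{-1}(\mathrm L(D_c))$, and the definition of $\mathfrak m(A)$ makes it precisely the preimage DFA of $D_c$ under $\phi$, so $\mathrm L(\mathfrak m(A))=\phi^{-1}(\mathrm L(D_c))$ as required. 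The main technical point, and essentially the only obstacle, is confirming that the restriction-renaming property transfers verbatim from $\mon_n^F$ to the generalized monster $M_c$; but since the original argument only uses that every transformation on $Q$ appears as a letter of $M_c$, this extension is routine.
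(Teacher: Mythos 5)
Your proposal is correct and follows essentially the same route as the paper: the ``modifier $\Rightarrow$ $1$-uniform'' direction is identical, and for the converse you build the modifier from a fixed DFA for $\otimes$ applied to a monster language and pull back transitions via $a\mapsto\delta^a$, exactly as the paper does (the paper first renames the state set to $\IntEnt{n}$ and uses the minimal DFA of $\otimes(\mathrm L(\mon_n^F))$, whereas you work with a monster over the arbitrary state set $Q$ and any fixed $D_c$ --- an immaterial difference). The one point you flag, extending the restriction-renaming property to your generalized monster, is indeed routine and is implicitly handled in the paper by the initial renaming step.
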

\begin{proof}
  Let $\otimes$ be a $1$-uniform unary operation. We define a $1$-modifier $\mathfrak m$ as follows.
  For any DFA $A=(\Sigma,Q_A,i_A,F_A,\delta_A)$, we can rename its set of states so that $A$ becomes the DFA $D=(\Sigma,\IntEnt{n},0,F,\delta)$. Let us denote by $B=(\IntEnt{n}^{\IntEnt{n}},Q',i',F',\delta')$ the minimal DFA of $\otimes(\mathrm L(\mon_n^{F}))$.
  We set $\mathfrak m(A)=(\Sigma,Q',i',F',\tilde \delta')$, with $\tilde\delta'(q,a)=\delta'(q,\delta^a)$. 
  Notice that $\mathfrak m$ is indeed a $1$-modifier. First, $(Q',i',F')$ depends only on $(Q_A,i_A,F_A)$. Second, $\tilde \delta'^a$ depends only on $\delta^a$ and on $\delta'$, which in turn depend only on $(Q_A,i_A,F_A)$ and $\delta_A^a$.
  
  Furthermore, by construction, $\mathrm L(\mathfrak m(A))=\phi^{-1}(\mathrm L(B))$, where $\phi$ is the $1$-uniform morphism such that $\phi(a)=\delta_D^a$ for all $a\in\Sigma$. Therefore, we have $\mathrm L(\mathfrak m (A))=\phi^{-1}(\otimes(\mathrm L(\mon_n^F)))$. And, since $\otimes$ is $1$-uniform, we obtain $\mathrm L(\mathfrak m (A))=\otimes(\phi^{-1}(\mathrm L(\mon_n^F)))=\otimes(L)$.
  
Conversely, let $\otimes$ be a regular operation and let $\mathfrak m$ be a $1$-modifier such that for any regular language $L$ and any DFA $A$ recognizing $L$, $\otimes(L)=\mathrm L(\mathfrak m(A))$. We must prove that $\otimes$ is $1$-uniform. Let $\Gamma$ and $\Sigma$ be two alphabets. Consider a 1-uniform morphism $\phi$ from $\Gamma^*$ to $\Sigma^*$ and a language $L$ over $\Sigma$. Let $A=(\Sigma,Q,i,F,\delta)$ be any DFA recognizing $L$ and let $B=(\Gamma,Q,i,F,\tilde \delta)$ the DFA such that $\tilde \delta^a=\delta^{\phi(a)}$ for any letter $a\in\Gamma$. We have $\mathrm L(B)=\phi^{-1}(\mathrm L(A))$.

Let $\mathfrak m(A)=(\Sigma,Q_1,i_1,F_1,\delta_1)$ and $\mathfrak m(B)=(\Gamma,Q_2,i_2,F_2,\delta_2)$. Since the state configuration of $A$ is the same as the state configuration of $B$, we have $(Q_1,i_1,F_1)=(Q_2,i_2,F_2)$. Furthermore, because the transition function of any letter $a\in\Gamma$ in $B$ is also the same as the transition function of $\phi(a)$ in $A$, we have $\delta_2^a=\delta_1^{\phi(a)}$. Hence, $\mathrm L(\mathfrak m(B))=\phi^{-1}(\mathrm L(\mathfrak m(A)))$, which implies that $\otimes(\mathrm L(B))=\phi^{-1}(\otimes(A))$. Therefore, $\otimes(\phi^{-1}(\mathrm L(A)))=\phi^{-1}(\otimes(\mathrm L(A)))$, as expected.
\end{proof}
We extend the previous theorems by generalizing the definitions to $k$-ary operations.
\begin{definition}\label{def-uni}
  A $k$-ary regular operation $\otimes$ is called $1$-uniform if, for any $k$-tuple of rational languages $(L_1,\ldots,L_k)$, for any $1$-uniform morphism $\phi$, $\otimes(\phi^{-1}(L_1),\ldots,\phi^{-1}(L_k))=\phi^{-1}(\otimes(L_1,\ldots,L_k))$.
\end{definition}
Using the same arguments as in Theorem \ref{th-mon}, we find
\begin{theorem}\label{th-mon2}
  Any $k$-ary $1$-uniform operation admits a family of monster $k$-languages as a witness.
\end{theorem}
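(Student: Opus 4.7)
The plan is to mirror the argument sketched just before Theorem~\ref{th-mon}, but invoking the $k$-ary restriction-renaming property rather than the unary one. Let $\otimes$ be a $k$-ary $1$-uniform operation and fix a tuple $(n_1,\ldots,n_k)\in(\mathbb{N}^*)^k$. Given any $k$-tuple of languages $(L_1,\ldots,L_k)$ over a common alphabet $\Sigma$ with $\mathrm{sc}(L_j)=n_j$ for each $j$, we may relabel states so that the minimal DFA of $L_j$ has the form $A_j=(\Sigma,\IntEnt{n_j},0,F_j,\delta_j)$. I would then form the $k$-monster $\mon_{n_1,\ldots,n_k}^{F_1,\ldots,F_k}=(\mathds{M}_1,\ldots,\mathds{M}_k)$ and the $1$-uniform morphism $\phi$ defined by $\phi(a)=(\delta_1^a,\ldots,\delta_k^a)$.

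The next step is to apply the $k$-ary restriction-renaming property stated right after Definition~\ref{def-mon} to conclude that $L_j=\phi^{-1}(\mathrm{L}(\mathds{M}_j))$ for every $j$. Using the assumed $1$-uniformity of $\otimes$ in the sense of Definition~\ref{def-uni}, this yields
\begin{equation}
\otimes(L_1,\ldots,L_k)=\otimes(\phi^{-1}(\mathrm{L}(\mathds{M}_1)),\ldots,\phi^{-1}(\mathrm{L}(\mathds{M}_k)))=\phi^{-1}(\otimes(\mathrm{L}(\mathds{M}_1),\ldots,\mathrm{L}(\mathds{M}_k))).
\end{equation}
Then Property~\ref{prop-scmorph} gives the inequality
\begin{equation}
\mathrm{sc}(\otimes(L_1,\ldots,L_k))\leq \mathrm{sc}(\otimes(\mathrm{L}(\mathds{M}_1),\ldots,\mathrm{L}(\mathds{M}_k))).
\end{equation}

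To conclude that the $k$-tuple $(\mathrm{L}(\mathds{M}_1),\ldots,\mathrm{L}(\mathds{M}_k))$ is genuinely a candidate for the maximum, I need $\mathrm{sc}(\mathrm{L}(\mathds{M}_j))=n_j$ for each $j$. This is exactly what Remark~\ref{remark-min} provides, as soon as $F_j\notin\{\emptyset,\IntEnt{n_j}\}$. Since the inequality above holds for every $(L_1,\ldots,L_k)$ of the prescribed state complexities, taking the maximum over such tuples shows that the monster tuple itself attains $\mathrm{sc}_\otimes(n_1,\ldots,n_k)$, and therefore the family $\bigl(\mon_{n_1,\ldots,n_k}^{F_1,\ldots,F_k}\bigr)$ (indexed by admissible choices of $F_j$) constitutes a witness of monster $k$-languages.

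The only real wrinkle is the degenerate case where some $F_j$ is forced to be $\emptyset$ or $\IntEnt{n_j}$, namely $n_j=1$: there the minimality clause of Remark~\ref{remark-min} does not literally apply, but both possible $F_j$'s give a one-state minimal DFA, so $\mathrm{sc}(\mathrm{L}(\mathds{M}_j))=1=n_j$ holds trivially. Thus no genuinely hard step arises; the main content of the proof is already packaged in Property~\ref{prop-scmorph}, Remark~\ref{remark-min}, and the $k$-ary restriction-renaming property, so the argument is essentially a verification that these pieces fit together exactly as in the unary case of Theorem~\ref{th-mon}.
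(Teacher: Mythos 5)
Your proposal is correct and follows essentially the same route as the paper's proof: restriction-renaming to write each $L_j$ as $\phi^{-1}(\mathrm{L}(\mathds{M}_j))$, commutation of $\otimes$ with $\phi^{-1}$ by $1$-uniformity, Property~\ref{prop-scmorph} for the inequality, and Remark~\ref{remark-min} for preservation of state complexity. Your explicit treatment of the degenerate case $n_j=1$ is a small extra care the paper leaves implicit, but it does not change the argument.
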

\begin{proof}
  Suppose now that $\otimes$ is a $k$-ary $1$-uniform operation. Then, if $(L_1,\ldots,L_k)$ is a $k$-tuple of regular languages over $\Sigma$, $(A_1,\ldots,A_k)$ the $k$-tuple of DFAs such that each $A_j=(\Sigma,Q_j,i_j,F_j,\delta_j)$ is the minimal DFA of $L_i$, and $\phi$ the $1$-uniform morphism such that, for all $a\in\Sigma$, $\phi(a)=(\delta_1^a,\ldots,\delta_k^a)$, and if $\mon_{n_1,\ldots,n_k}^{F_1,\ldots,F_k}=(\mathds{M}_1,\ldots,\mathds{M}_k)$, then $\otimes(L)=\otimes(\phi^{-1}(\mathrm{L}(\mathds{M}_1)),\ldots,\phi^{-1}(\mathrm{L}(\mathds{M}_k)))=\phi^{-1}(\otimes(\mathrm{L}(\mathds{M}_1),\ldots,\mathrm{L}(\mathds{M}_k)))$.
It follows that $\mathrm{sc}(\otimes(L))= \mathrm{sc}(\phi^{-1}(\otimes(\mathrm{L}(\mathds{M}_1),\ldots,\mathrm{L}(\mathds{M}_k))))\leq \mathrm{sc}(\otimes(\mathrm{L}(\mathds{M}_1),\ldots,\mathrm{L}(\mathds{M}_k)))$ by Property \ref{prop-scmorph}. In addition, each $\mathrm{L}(\mathds{M}_j)$ has the same state complexity as $L_j$.
\end{proof}
\begin{definition}\label{def-mod}
  A $k$-modifier is a $k$-ary operation on DFAs over the same alphabet that returns a DFA and such that :
  \begin{itemize}
  \item The alphabet of $\mathfrak m (A_1,...,A_k)$ is the same as the alphabet of each $A_j$.
  \item For any $k$-tuple of DFAs $(A_1,\ldots,A_k)$, the state configuration of $\mathfrak m (A_1,...,A_k)$ depends only on the state configurations of the DFAs $A_1,\ldots,A_k$.
  \item For any $k$-tuple of DFAs $(A_1,\ldots,A_k)$ where each DFA is over the alphabet $\Sigma$, for any letter $a\in\Sigma$, the transition function of $a$ in $\mathfrak m (A_1,\ldots,A_k)$ depends only on the state configurations of the DFAs $A_1,\ldots, A_k$ and on the transition functions of $a$ in each of the DFAs $A_1,...,A_k$.
  \end{itemize}
\end{definition}
\begin{example}\label{ex-xor}
  For all DFAs $A=(\Sigma,Q_1,i_1,F_1,\delta_1)$ and $B=(\Sigma,Q_2,i_2,F_2,\delta_2)$, define
  \[\mathfrak{Xor}(A,B)=(\Sigma,Q_1\times Q_2,(i_1,i_2),(F_1\times(Q_2\setminus F_2)\cup (Q_1\setminus F_1)\times F_2),(\delta_1,\delta_2))\]
\end{example}
The modifier $\mathfrak{Xor}$ describes the classical construction associated to the operation Xor on couples of languages, \emph{i.e} for all DFAs $A$ and $B$, $\mathrm L(A)\oplus \mathrm L(B)=\mathrm L(\mathfrak{Xor}(A,B))$.
\begin{theorem}\label{th-mod}
  A regular $k$-ary operation $\otimes$ is $1$-uniform if and only if there exists a $k$-modifier $\mathfrak m$ such that for any $k$-tuple of regular languages $(L_1,\ldots,L_k)$ and any $k$-tuple of DFAs $(A_1,\ldots,A_k)$ such that each $A_j$ recognizes $L_j$, we have $\otimes(L_1,\ldots,L_k)=\mathrm L(\mathfrak m(A_1,\ldots,A_k))$.
\end{theorem}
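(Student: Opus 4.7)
The plan is to mirror the proof of Theorem \ref{the-eq} almost verbatim, just replacing single DFAs by $k$-tuples and single transition functions by $k$-tuples of transition functions indexed over the components. The two implications will be handled symmetrically in the same way as in the unary case.

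For the forward direction, I assume $\otimes$ is $1$-uniform and construct a candidate $k$-modifier $\mathfrak m$ as follows. Given a $k$-tuple $(A_1,\ldots,A_k)$ of DFAs over a common alphabet $\Sigma$, rename states so that each $A_j$ becomes $D_j=(\Sigma,\IntEnt{n_j},0,F_j,\delta_j)$. Let $B=(\IntEnt{n_1}^{\IntEnt{n_1}}\times\cdots\times\IntEnt{n_k}^{\IntEnt{n_k}},Q',i',F',\delta')$ be the minimal DFA of $\otimes(\mathrm L(\mathds{M}_1),\ldots,\mathrm L(\mathds{M}_k))$ where $(\mathds{M}_1,\ldots,\mathds{M}_k)=\mon_{n_1,\ldots,n_k}^{F_1,\ldots,F_k}$, and define $\mathfrak m(A_1,\ldots,A_k)=(\Sigma,Q',i',F',\tilde\delta')$ with $\tilde\delta'(q,a)=\delta'(q,(\delta_1^a,\ldots,\delta_k^a))$. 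I then verify the three clauses of Definition \ref{def-mod}: the alphabet is $\Sigma$ by construction; $(Q',i',F')$ depends only on $(n_1,\ldots,n_k,F_1,\ldots,F_k)$ (hence on the state configurations of the $A_j$) because $B$ itself does; and $\tilde\delta'^a$ depends only on $(\delta_1^a,\ldots,\delta_k^a)$ together with $\delta'$, which in turn is determined by the state configurations. Finally, by the restriction-renaming property for $k$-monsters applied to the $1$-uniform morphism $\phi$ sending $a\mapsto(\delta_1^a,\ldots,\delta_k^a)$, one gets $\mathrm L(\mathfrak m(A_1,\ldots,A_k))=\phi^{-1}(\mathrm L(B))=\phi^{-1}(\otimes(\mathrm L(\mathds{M}_1),\ldots,\mathrm L(\mathds{M}_k)))$, and $1$-uniformity of $\otimes$ pushes $\phi^{-1}$ inside to yield $\otimes(L_1,\ldots,L_k)$.

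For the converse, I suppose $\mathfrak m$ is a $k$-modifier computing $\otimes$ on languages and I must show Definition \ref{def-uni} holds. Fix two alphabets $\Gamma,\Sigma$, a $1$-uniform morphism $\phi\colon\Gamma^*\to\Sigma^*$, and $k$ languages $L_j$ over $\Sigma$. For each $j$ pick a DFA $A_j=(\Sigma,Q_j,i_j,F_j,\delta_j)$ recognizing $L_j$ and form $B_j=(\Gamma,Q_j,i_j,F_j,\tilde\delta_j)$ with $\tilde\delta_j^a=\delta_j^{\phi(a)}$, so that $\mathrm L(B_j)=\phi^{-1}(L_j)$. Since $A_j$ and $B_j$ share the same state configuration, the defining clauses of a $k$-modifier force $\mathfrak m(A_1,\ldots,A_k)$ and $\mathfrak m(B_1,\ldots,B_k)$ to have equal state configurations and to satisfy, for every $a\in\Gamma$, the identity between the transition function of $a$ in $\mathfrak m(B_1,\ldots,B_k)$ and that of $\phi(a)$ in $\mathfrak m(A_1,\ldots,A_k)$. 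This gives $\mathrm L(\mathfrak m(B_1,\ldots,B_k))=\phi^{-1}(\mathrm L(\mathfrak m(A_1,\ldots,A_k)))$, which translated through the hypothesis on $\mathfrak m$ is exactly $\otimes(\phi^{-1}(L_1),\ldots,\phi^{-1}(L_k))=\phi^{-1}(\otimes(L_1,\ldots,L_k))$, as required.

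There is no real obstacle beyond bookkeeping: the delicate point, as in the unary case, is to check that the transition functions in the constructed modifier depend only on the local data allowed by Definition \ref{def-mod}. The key observation that makes this routine is that the $k$-monster alphabet is precisely $\IntEnt{n_1}^{\IntEnt{n_1}}\times\cdots\times\IntEnt{n_k}^{\IntEnt{n_k}}$, so sending each input letter $a$ to the tuple $(\delta_1^a,\ldots,\delta_k^a)$ is both a $1$-uniform morphism and a complete specification of the information a $k$-modifier is allowed to use.
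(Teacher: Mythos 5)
Your proposal is correct and is exactly the adaptation the paper has in mind: the paper's entire proof of Theorem \ref{th-mod} is the remark that the proof of Theorem \ref{the-eq} "can be easily adapted to $k$-ary operations," and you carry out that adaptation faithfully, replacing the $1$-monster by the $k$-monster and the morphism $a\mapsto\delta^a$ by $a\mapsto(\delta_1^a,\ldots,\delta_k^a)$ in both directions. No gaps.
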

The proof of Theorem \ref{the-eq} can be easily adapted to $k$-ary operations.

The following proposition states the effects of composition on modifiers and $1$-uniform operations and directly stems from Definitions \ref{def-uni} and \ref{def-mod}.
\begin{proposition}\label{prop-comp}
  Let $\otimes_1$ be a $k_1$-ary $1$-uniform operation and $\otimes_2$ be a $k_2$-ary $1$-uniform operation. The $(k_1+k_2)$-ary operation defined by $\otimes(L_1,\ldots,L_{k_1+k_2})= \otimes_1(L_1,\ldots,L_{l},\otimes_2(L_{l+1},\ldots,L_{l+k_2}),L_{l+k_2},\ldots,L_{k_1+k_2})$ is $1$-uniform. Furthermore, if $\mathfrak m_1$ is a $k_1$-modifier associated with $\otimes_1$ and $\mathfrak m_2$ is a $k_2$-modifier associated with $\otimes_2$, the operation on $(k_1+k_2)$-tuples of DFAs defined by
  \begin{equation}
    \mathfrak m(A_1,\ldots,A_{k_1+k_2})=\mathfrak m_1(A_1,\ldots,A_{l},\mathfrak m_2(A_{l+1},\ldots,A_{l+k_2}),A_{l+k_2},\ldots,A_{k_1+k_2})
  \end{equation}
  is a modifier associated to $\otimes$.
\end{proposition}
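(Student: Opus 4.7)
The plan is to treat the two assertions separately, both reducing to straightforward applications of Definitions \ref{def-uni} and \ref{def-mod} together with Theorem \ref{th-mod}. There is no substantive obstacle; the proof is essentially a bookkeeping exercise in which $\phi^{-1}$ is pulled through the composition one layer at a time, and the three clauses of the modifier definition are verified layer by layer.

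For the first assertion, I would fix a $1$-uniform morphism $\phi$ and a $(k_1+k_2)$-tuple of regular languages $(L_1,\ldots,L_{k_1+k_2})$, then expand $\otimes(\phi^{-1}(L_1),\ldots,\phi^{-1}(L_{k_1+k_2}))$ using its definition. The inner occurrence of $\otimes_2$ applied to $(\phi^{-1}(L_{l+1}),\ldots,\phi^{-1}(L_{l+k_2}))$ equals $\phi^{-1}(\otimes_2(L_{l+1},\ldots,L_{l+k_2}))$ by $1$-uniformity of $\otimes_2$. Substituting this back makes the outer expression an application of $\otimes_1$ to a tuple each of whose entries is of the form $\phi^{-1}(\cdot)$, so $1$-uniformity of $\otimes_1$ lets one factor $\phi^{-1}$ outside. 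The result is exactly $\phi^{-1}(\otimes(L_1,\ldots,L_{k_1+k_2}))$, proving $1$-uniformity of $\otimes$.

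For the second assertion I would first check that the composite $\mathfrak m$ is itself a $(k_1+k_2)$-modifier by verifying the three clauses of Definition \ref{def-mod}. The alphabet of $\mathfrak m_2(A_{l+1},\ldots,A_{l+k_2})$ is $\Sigma$ by the alphabet clause for $\mathfrak m_2$, hence so is the alphabet of $\mathfrak m(A_1,\ldots,A_{k_1+k_2})$ by the alphabet clause for $\mathfrak m_1$. For the state-configuration clause, the state configuration of $\mathfrak m_2(A_{l+1},\ldots,A_{l+k_2})$ depends only on the state configurations of $A_{l+1},\ldots,A_{l+k_2}$, and feeding this tuple into $\mathfrak m_1$ yields a DFA whose state configuration therefore depends only on the state configurations of $A_1,\ldots,A_{k_1+k_2}$. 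The transition-function clause is similar: the transition function of a letter $a$ in $\mathfrak m_2(A_{l+1},\ldots,A_{l+k_2})$ depends only on the state configurations and on the transition functions of $a$ in $A_{l+1},\ldots,A_{l+k_2}$, and the same reasoning applied to $\mathfrak m_1$ produces the required dependence.

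Finally I would invoke that $\mathfrak m_1$ is associated with $\otimes_1$ and $\mathfrak m_2$ with $\otimes_2$: starting from DFAs $A_j$ recognizing $L_j$, the DFA $\mathfrak m_2(A_{l+1},\ldots,A_{l+k_2})$ recognizes $\otimes_2(L_{l+1},\ldots,L_{l+k_2})$, and then $\mathfrak m_1$ applied to the tuple $(A_1,\ldots,A_l,\mathfrak m_2(\cdots),A_{l+k_2},\ldots,A_{k_1+k_2})$ recognizes $\otimes_1(L_1,\ldots,L_l,\otimes_2(L_{l+1},\ldots,L_{l+k_2}),L_{l+k_2},\ldots,L_{k_1+k_2})$, which is $\otimes(L_1,\ldots,L_{k_1+k_2})$ by definition. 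Combined with the previous paragraph and Theorem \ref{th-mod}, this shows that $\mathfrak m$ is a modifier associated with $\otimes$. The only mildly delicate point is keeping the indices in the two substitution schemes consistent; otherwise the argument is purely formal.
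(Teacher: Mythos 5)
Your proof is correct and matches the paper's approach: the paper offers no explicit proof, asserting only that the proposition ``directly stems from Definitions \ref{def-uni} and \ref{def-mod}'', and your argument is precisely the routine verification being alluded to --- pulling $\phi^{-1}$ through the inner $\otimes_2$ and then the outer $\otimes_1$, checking the three modifier clauses layer by layer, and confirming that the composite DFA construction recognizes $\otimes(L_1,\ldots,L_{k_1+k_2})$. The one blemish you flag is inherited from the paper's own statement: the index $L_{l+k_2}$ appears both as the last argument of $\otimes_2$ and as the first argument following it (presumably it should read $L_{l+k_2+1}$), and your handling of this is fine.
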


\section{State complexity of the star of symmetrical difference}\label{sec-sc}

In this section, we compute the state complexity of the $2$-ary regular operation $L_1\ostar L_2=(L_1\oplus L_2)^*$. Examples \ref{ex-star} and \ref{ex-xor} together with Proposition \ref{prop-comp} show that $\ostar$ is $1$-uniform and that an associated modifier can be defined by $\mathfrak{StX}(A_1,A_2)=\mathfrak{Star}(\mathfrak{Xor}(A_1,A_2))$. To be more precise, if $A_1=(\Sigma, Q_1,i_1,F_1,\delta_1)$ and $A_2=(\Sigma,Q_2,i_2,F_2,\delta_2)$, then
\[\mathfrak{StX}(A_1,A_2)=(\Sigma,2^{Q_1\times Q_2},\emptyset,\{E\in 2^{Q_1 \times Q_2}\mid E\cap F\neq\emptyset \}\cup\{\emptyset\},\delta)\]
where $F=(F_1\times Q_2)\oplus(Q_1\times F_2)$ and, for all $a\in \Sigma$,
\[\delta^a(\emptyset)=\left\{\begin{array}{ll}\{(\delta_1^a(i_1),\delta_2^a(i_2))\}\text{ if }(\delta_1^a(i_1),\delta_2^a(i_2))\notin F \\
  \{(\delta_1^a(i_1),\delta_2^a(i_2)),(i_1,i_2)\}\mbox{ otherwise }
\end{array}\right.\]
\begin{center}
and, for all $E\neq\emptyset$, $\delta^a(E)=\left\{\begin{array}{ll}(\delta_1^a,\delta_2^a)(E)\text{ if }(\delta_1^a,\delta_2^a)(E)\cap F=\emptyset \\
(\delta_1^a,\delta_2^a)(E)\cup\{(i_1,i_2)\}\mbox{ otherwise. }
\end{array}\right.$
\end{center}

Theorem \ref{th-mon2} states that $\ostar$ admits a family of $2$-monsters as witness. For any positive integers $n_1,n_2$, let $(\mathds M_1,\mathds M_2)=\mon_{n_1,n_2}^{\{n_1-1\},\{0\}}$. We are going to show that, for all $(n_1,n_2)\in \mathbb {N^*}^2$, $(\mathrm L(\mathds M_1)),\mathrm L(\mathds M_2))$ is indeed a witness for $\ostar$. This allows us to compute its state complexity.
To be more precise, here is the outline of our proof. For any positive integers $n_1,n_2$, any $F_1,F_2\subseteq\IntEnt{n_1}\times\IntEnt{n_2}$, let us denote by $\mathrm{M}_{F_1,F_2}$ the DFA $\mathfrak{StX}(\mon_{n_1,n_2}^{F_1,F_2})$. We are going to minimize the DFA $\mathrm{M}_{\{n_{1}-1\},\{0\}}$ by first computing its accessible states, and then, restricting it to its accessible states, by computing its Nerode equivalence. We will therefore have computed the minimal DFA equivalent to $\mathrm{M}_{\{n_{1}-1\},\{0\}}$, and computing its size allows us to compute the state complexity of $\mathrm L(\mathrm{M}_{\{n_{1}-1\},\{0\}})$. We  then show that the state complexity of $\mathrm L(\mathrm{M}_{\{n_{1}-1\},\{0\}})$ is the greatest out of all the state complexities of $\mathrm L(\mathrm{M}_{F_1,F_2})$, with $(F_1,F_2)\subseteq\IntEnt{n_1}\times\IntEnt{n_2}$. Theorem \ref{th-mon2} allows us to conclude that the state complexity of $\mathrm L(\mathrm{M}_{\{n_{1}-1\},\{0\}})$ is indeed $sc_{\ostar}(n_1,n_2)$.
\subsection{Computing the accessible states of $\mathrm{M}_{\{n_{1}-1\},\{0\}}$}
In order to understand more easily the next proofs, we associate elements of $2^{\IntEnt{n_1}\times\IntEnt{n_2}}$ to boolean matrices of size $n_1\times n_2$. Such a matrice is called a tableau when crosses are put in place of $1$s, and $0$s are erased. We denote by the same letter the element of $2^{\IntEnt{n_1}\times\IntEnt{n_2}}$, the associated boolean matrix, and the associated tableau. If $T$ is an element of $2^{\IntEnt{n_1}\times\IntEnt{n_2}}$, we denote by $T_{x,y}$ the value of the boolean matrix $T$ at row $x$ and column $y$. Therefore, the three following assertions mean the same thing : a cross is at the coordinates $(x,y)$ in $T$, $T_{x,y}=1$, $(x,y)\in T$.

We say that a cross at coordinates $(x,y)$ in an element of $2^{\IntEnt{n_1}\times\IntEnt{n_2}}$ is in the final zone of $\mathrm{M_{F_1,F_2}}$ if  $(x,y) \in (F_1\times\llbracket n_2 \rrbracket) \oplus (\llbracket n_1 \rrbracket \times F_2)$. We remark that an element of $2^{\IntEnt{n_1}\times\IntEnt{n_2}}$ is final in $\mathrm{M_{F_1,F_2}}$ if and only if it has a cross in the final zone of $\mathrm{M_{F_1,F_2}}$.
We fix for the remainder of this section two positive integers $n_1$ and $n_2$.
\begin{lemma}\label{lemma-acc}
  The states of $\mathrm{M}_{\{n_{1}-1\},\{0\}}$ that are accessible are exactly the tableaux $T$ of size $n_1\times n_2$ such that, if $T$ has a cross in the final zone of $\mathrm{M}_{\{n_{1}-1\},\{0\}}$ , then $T$ has cross at $(0,0)$.
\end{lemma}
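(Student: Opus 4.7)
The plan is to prove both inclusions separately. The easy direction, that every accessible state satisfies the property, is an induction on the length of a shortest access word from $\emptyset$. The initial state $\emptyset$ satisfies the property vacuously. For the inductive step, if a state $T$ with the property has a successor $T'$ under a letter $(f_1,f_2)$, then, by the definition of $\mathfrak{StX}$, either $T'=(f_1\times f_2)(T)$ and $T'\cap F=\emptyset$ (property vacuous), or $T'=(f_1\times f_2)(T)\cup\{(0,0)\}$ and $(0,0)\in T'$ (property trivial).

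For the converse direction, I would induct on $|T|$, splitting into two families: Case I where $(0,0)\in T$ (so the property is automatic regardless of $T\cap F$), and Case II where $T\cap F=\emptyset$ (and in particular $(0,0)\notin T$). The base cases $|T|\leq 1$ are immediate, reached from $\emptyset$ by a single letter with $(f_1(0),f_2(0))$ equal to the unique cross. For Case I with $|T|\geq 2$, I would first reach a two-cross state $\{C^*,(0,0)\}$ using some auxiliary $C^*\in F\setminus\{(0,0)\}$ lying outside row $0$ and column $0$ (available whenever $n_1,n_2\geq 2$), and then add the remaining crosses of $T$ one at a time by letters that fix the current non-$(0,0)$ crosses, map $(0,0)$ onto the new cross, and trigger the $\mathfrak{Star}$-re-addition of $(0,0)$ through the presence of $C^*$ in the image. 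Case II is reduced to Case I by taking $T\cup\{(0,0)\}$ as a predecessor (itself accessible by Case I) and applying a letter that fixes the crosses of $T$ and maps $(0,0)$ onto one of them; since $T\cap F=\emptyset$, the image misses $F$ and $(0,0)$ is not re-added.

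The main technical obstacle is the coordinate-wise consistency of $(f_1,f_2)$: the value of $f_1(0)$ is pinned to $0$ whenever the current state contains a non-$(0,0)$ cross in row $0$, and symmetrically for $f_2(0)$ and column $0$. Once both are pinned, $(0,0)$ can no longer be moved to any cell other than $(0,0)$ itself, blocking the naive one-cross-at-a-time induction when $T$ uses both row $0$ and column $0$ outside $(0,0)$. I would overcome this by carefully ordering the additions (inserting the row-$0$ and column-$0$ crosses of $T$ as late as possible, in separate phases) and, for the remaining edge cases, routing through enriched intermediate tableaux containing an extra auxiliary cross outside row $0$ and column $0$ whose role is to decouple the two pinning constraints via a swap, before being collapsed onto an existing cross in a later step. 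Since every intermediate tableau still contains $(0,0)$, it belongs to Case I and satisfies the property, keeping the induction well-founded.
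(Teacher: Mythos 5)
Your forward direction is fine, and you correctly identify the pinning obstacle, but the resolution you propose does not go through; the cleanest failure is in your Case II. Take $T=\{(0,1),(n_1-1,0)\}$ with $n_1,n_2\ge 2$: neither cell is in the final zone (note that $(n_1-1,0)$ lies in both row $n_1-1$ and column $0$, hence \emph{outside} their symmetric difference), so $T$ satisfies the hypothesis of the lemma and must be shown accessible. Your plan is to reach it from $T\cup\{(0,0)\}$ by a letter that fixes the crosses of $T$ and collapses $(0,0)$ onto one of them. But fixing $(0,1)$ forces $f(0)=0$ and fixing $(n_1-1,0)$ forces $g(0)=0$, so $(0,0)$ maps to itself and cannot be removed. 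Worse, \emph{no} letter sends $T\cup\{(0,0)\}$ to $T$: covering $(0,1)$ and $(n_1-1,0)$ in the image forces $f$ to permute $\{0,n_1-1\}$ and $g$ to permute $\{0,1\}$, so the image of the three-cross predecessor again has three crosses. The same pinning threatens the final ``collapse'' of your auxiliary cross $C^*=(n_1-1,j)$ in Case I whenever $T$ contains $(n_1-1,0)$ and a cross in column $j$ (e.g. $T=\{(0,0),(n_1-1,0),(1,1)\}$ with $n_2=2$ leaves no admissible choice of $C^*$); the phrase ``routing through enriched intermediate tableaux $\ldots$ via a swap'' names the difficulty but is not an argument. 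Finally, your induction on $|T|$ is not well founded as stated: the Case II predecessor $T\cup\{(0,0)\}$ and the $C^*$-enriched intermediate states all have strictly more crosses than the target.

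The paper avoids all of this by inducting on the lexicographic order given by $(\#T,\#_{\mathrm{nf}}T)$, where $\#_{\mathrm{nf}}T$ counts the crosses outside the final zone, and by taking as predecessor of $T'$ an image $(f,g)(T')$ under a pair of transpositions: the predecessor then has the \emph{same} number of crosses but strictly fewer non-final ones, and the letter, being a permutation, is immune to the pinning problem. For instance $\{(0,1),(n_1-1,0)\}$ is reached in one step from $\{(0,0),(n_1-1,1)\}$ by the letter $((0,n_1-1),\mathds{1})$. If you want to keep a purely ``add one cross at a time'' scheme, you will at minimum need a secondary induction parameter of this kind together with an explicit treatment of the tableaux meeting both row $0$ and column $0$; as written, those are exactly the cases your sketch leaves open.
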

\begin{proof}
  It is easy to see by the definition of the transition function of $\mathfrak{StX}$ that every tableau $T$ with a cross in the final zone of $\mathrm{M}_{\{n_{1}-1\},\{0\}}$ and no cross at $(0,0)$ is not accessible.
  
  Let $\delta$ be the transition function of $\mathrm{M}_{\{n_{1}-1\},\{0\}}$. If $T$ is a tableau of size $n_1\times n_2$, let $\#_\mathrm{nf}T$ be the number of crosses of $T$ which are not in the final zone of $\mathrm{M}_{\{n_{1}-1\},\{0\}}$.
    Let us define an order $<$ on cross matrices as $T < T'$ if and only if $\#T < \#T'$ or $(\#T = \#T'$ and $\#_\mathrm{nf}T < \#_\mathrm{nf}T')$ .\\
    Let us prove every tableau $T$ of size $n_1\times n_2$ such that, if $T$ has a cross in the final $(\{n_{1}-1\},\{0\})$-zone, then $T$ has cross at $(0,0)$, is accessible by induction on non-empty cross matrices for the partial order $<$ (the empty cross matrix is the initial state of $\mathrm{M}_{\{n_{1}-1\},\{0\}}$, and so it is accessible).
    
    The only minimal cross matrix for non-empty matrices and the order $<$ is the cross matrix with only one cross at $(0,0)$. This is accessible from the initial state $\emptyset$ by reading the letter $(\mathds{1},\mathds{1})$. Let us notice that each letter is a couple of functions of $\IntEnt{n_1}^{\IntEnt{n_1}}\times \IntEnt{n_2}^{\IntEnt{n_2}}$.
    Now let us take a cross matrix $T'$, and find a cross matrix $T$ such that $T<T'$, and T' is accessible from $T$. We distinguish the cases :
    \begin{itemize}
    \item $T'$ has no cross in the final zone, except maybe at $(0,0)$.
      \begin{itemize}
      \item \emph{Case $T'_{n_1-1,0}=0$.} Let $(i,j)$ be the index of a cross of $T'$. 
        Let $(f,g)=((0,i),(0,j))$ where $(0,i)$ and $(0,j)$ denote transpositions, and let $T=(f,g)(T')$ where $(f,g)(T')=\{(f(i),g(j))\mid (i,j)\in T'\}$. As $(f,g)$ is a one-to-one transformation on $\llbracket n_{1}\rrbracket \times \llbracket n_{2}\rrbracket$, as $(f,g)(T')$ has a cross at $(0,0)$ and as $T'$ does not have any  crosses in the final zone, we have  $\delta^{(f,g)}(T)= (f,g)(T)=(f,g)(f,g)(T')=T' $.
        We also have $T<T'$ since $\#T=\#T'$ and $\#_\mathrm{nf}T<\#_\mathrm{nf}T'$.
      \item \emph{Case $T'_{n_1-1,0}=1$.} Let $(f,g)=((0,n_1-1),\mathds{1})$ and let $T=(f,g)(T')$. We have $\delta^{(f,g)}(T)=T'$, and $T<T'$ as $\#_\mathrm{nf}T < \#_\mathrm{nf}T'$.      
      \end{itemize}
    \item $T'$ has a cross in the final zone other than $(0,0)$.

      Let $(i,j)$ be  such a cross, and let $(f,g)=((0,i),(0,j))$. Let $T''$ be the cross matrix obtained from $T'$ by deleting the cross at $(0,0)$. Let $T=(f,g)(T'')$.
      As $(f,g)$ is still one-to-one on $\llbracket n_{1}\rrbracket \times \llbracket n_{2}\rrbracket$,  we have $T_{0,0}=((f,g)(T''))_{0,0}=T''_{i,j}=1$, and $((f,g)(T))=((f,g)((f,g)(T'')))=T''$. As $T''$ has a  cross in the final zone, we therefore have  $\delta^{(f,g)}(T)=T'$ and $T<T'$ as $\#T<\#T'$.
    \item The only cross of $T'$ which is in the final zone is $(0,0)$.
      \begin{itemize}
      \item \emph{Case $A$: there exists $j$ such that $T'_{0,j}=1$.}
        
        Let $(f,g)=(\left(n_1-1\atop 0\right),\mathds{1})$ and let $T_{i,j}=\left\{\begin{array}{ll}
        1 & \mbox{if } (i,j)=(0,0),\\
        T'_{0,j} & \mbox{if } i=n_1-1 \land j\neq 0,\\
        T'_{i,j} & \mbox{otherwise.}
        \end{array}\right.$
        
        It is easy to check that $\delta^{(f,g)}(T)=T'$, and $T<T'$ as $\#_\mathrm{nf}T < \#_\mathrm{nf}T'$.
              
        \begin{figure}
          \centerline{
            \begin{tikzpicture}[scale=0.4]
              \draw[step=1.0,black, thin] (0,0) grid (5,5);
              \draw[black,very thick] (0,1) -- (5,1);
              \draw[black,very thick] (1,0) -- (1,5);
              \node[scale=1.5] at (2.5,1.5) {$\times$};
              \node[scale=1.5] at (0.5,4.5) {$\times$};
              \node[scale=1.5] at (3.5,3.5) {$\times$};
              \node[scale=1.5,blue] at (2.5,4.5) {$\times$};
              \node[scale=1.5] at (1.5,3.5) {$\times$};
              \draw[->] (9,2.5) -- node[midway,above,scale=1.5] {\tiny $\left(\left(4\atop 0\right),\mathds{1}\right)$} (6,2.5) ;
              \draw[step=1.0,black, thin] (10,0) grid (15,5);
              \draw[black,very thick] (10,1) -- (15,1);
              \draw[black,very thick] (11,0) -- (11,5);
              \node[scale=1.5] at (12.5,1.5) {$\times$};
              \node[scale=1.5] at (10.5,4.5) {$\times$};
              \node[scale=1.5] at (13.5,3.5) {$\times$};
              \node[scale=1.5,red] at (12.5,0.5) {$\times$};
              \node[scale=1.5] at (11.5,3.5) {$\times$};
          \end{tikzpicture}}
          \caption{The two tableaux $T'$ and $T$ for case $A$.}
                \end{figure}
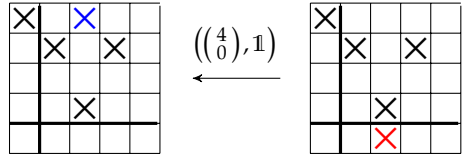
        
      \item \emph{ Case $\neg A$ and $T'_{n_1-1,0}=0$.} There exists $(i,j)\neq (n_1-1,0)$ such that $i\neq 0$ and $T'_{i,j}=1$. Let $(f,g)=((i,n_1-1),\mathds{1})$ and let $T=(f,g)(T')$. We have $\delta^{(f,g)}(T)=T'$, and $T<T'$ as $\#_\mathrm{nf}T < \#_\mathrm{nf}T'$.
        \begin{figure}
          \centerline{
            \begin{tikzpicture}[scale=0.4]
              \draw[step=1.0,black, thin] (0,0) grid (5,5);
              \draw[black,very thick] (0,1) -- (5,1);
              \draw[black,very thick] (1,0) -- (1,5);
              \node[scale=1.5] at (2.5,1.5) {$\times$};
              \node[scale=1.5] at (0.5,4.5) {$\times$};
              \node[scale=1.5,blue] at (3.5,3.5) {$\times$};
              \node[scale=1.5,blue] at (1.5,3.5) {$\times$};
              \draw[->] (9,2.5) -- node[midway,above,scale=1.5] {\tiny $\left(\left(1,4\right),\mathds{1}\right)$} (6,2.5) ;
              \draw[step=1.0,black, thin] (10,0) grid (15,5);
              \draw[black,very thick] (10,1) -- (15,1);
              \draw[black,very thick] (11,0) -- (11,5);
              \node[scale=1.5] at (12.5,1.5) {$\times$};
              \node[scale=1.5] at (10.5,4.5) {$\times$};
              \node[scale=1.5,red] at (13.5,0.5) {$\times$};
              \node[scale=1.5,red] at (11.5,0.5) {$\times$};
          \end{tikzpicture}}
          \caption{The two tableaux $T'$ and $T$ for case $\neg A$ and $T'_{n_1-1,0}=0$.}
        \end{figure}
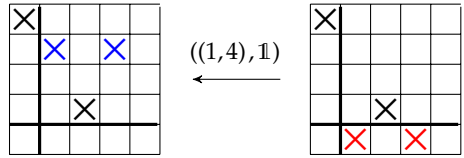
      \item \emph{Case $\neg A$ and $T'_{n_1-1,0}=1$.} Let $(f,g)=(\left(n_1-2,n_1-1\right),\mathds{1})$ and 
        let $T=(f,g)(T')$. We have $\delta^{(f,g)}(T)=T'$, and $T<T'$ as $\#_\mathrm{nf}T < \#_\mathrm{nf}T'$.      
      \end{itemize}
    \end{itemize}
    
\end{proof}
For all $(F_1,F_2)\subseteq\IntEnt{n_1}\times\IntEnt{n_2}$, let us now call $\mathrm{\widehat M}_{F_1,F_2}$ the DFA $\mathrm{M}_{F_1,F_2}$ restricted to states $T$ such that, if $T$ has a cross in the final zone of $\mathrm{M}_{F_1,F_2}$, then $T$ has cross at $(0,0)$. The following remark stems from the formula given for $\mathfrak{StX}$.
\begin{remark}\label{remark-acc}
  The accessible part of $\mathrm{M}_{F_1,F_2}$ is included in $\mathrm{\widehat M}_{F_1,F_2}$.
\end{remark}
\subsection{Computing the Nerode equivalence of $\mathrm{\widehat M}_{\{n_{1}-1\},\{0\}}$}
   
\begin{definition}
  A tableau $T$ in $2^{\IntEnt{n_1}\times \IntEnt{n_2}}$ is right-triangle free if $\forall x, x' \in \IntEnt{n_{1}}$ such that $x\neq x'$  and $\forall y,y' \in \IntEnt{n_{2}}$, such that $y\neq y'$, we have  $\#(\{(x,y),(x,y'),(x',y),(x',y')\}\cap T) \neq 3$.
\end{definition}

\begin{figure}
  \centerline{
    \begin{tikzpicture}[scale=0.5]
      \draw[step=1.0,black, thin] (0,0) grid (4,4);
      \draw[black] (0,2) -- (4,2);
      \draw[black] (3,0) -- (3,4);
      \node[scale=1.5] at (2.5,0.5) {$\times$};
      \node[scale=1.5] at (2.5,1.5) {$\times$};
      \node[scale=1.5] at (0.5,1.5) {$\times$};
      %
      %
  \end{tikzpicture}}
  \caption{A tableau with a right-triangle}
\end{figure}

\begin{definition}\label{def-fleche}
  If $T$ and $T'$ are distinct tableaux, we define the transformation on tableaux $\rightarrow$ as  $T \rightarrow T'$ if $T'=T \cup \{(i',j')\}$, and there exists $(i,j)$ such that  $\{(i,j),(i',j),(i,j')\}\subseteq T$. The equivalence relation $\overset{*}{\leftrightarrow}$ is defined as the  symmetric, reflexive and transitive closure of $\rightarrow$.
\end{definition}
   
For any tableau $T$, we define $\mathrm{Sat}(T)$ as the smallest tableau (relatively to inclusion) with no right-triangle containing $T$. The existence and the unicity of $\mathrm{Sat}(T)$ are easy to check. It is the representative of the equivalence class of $T$. Two tableaux $T$ and $T'$ are therefore equivalent if $\mathrm{Sat}(T)=\mathrm{Sat}(T')$.
\begin{lemma}\label{lemma-lines}
  The tableau $T$ in $2^{\IntEnt{n_1}\times \IntEnt{n_2}}$ is right-triangle free if and only if  for all $i,i' \in \llbracket n_{1}\rrbracket$, the lines $i$ and $i'$ are either the same  (for all $j\in \llbracket n_2 \rrbracket, T_{i,j}=T_{i',j}$), or disjoint  (for all $j\in \llbracket n_2 \rrbracket, T_{i,j}=0 \lor T_{i',j}=0$).
\end{lemma}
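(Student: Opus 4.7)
The plan is to prove both directions by contraposition/direct case analysis on pairs of rows.

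For the forward direction, I would assume $T$ is right-triangle free and show that any two rows $i, i'$ must be either identical or disjoint. I argue by contraposition: suppose rows $i$ and $i'$ are neither identical nor disjoint. Being not disjoint yields some column $j$ with $T_{i,j}=T_{i',j}=1$; being not identical yields some column $j' \neq j$ with $T_{i,j'} \neq T_{i',j'}$. Up to swapping $i$ and $i'$, assume $T_{i,j'}=1$ and $T_{i',j'}=0$. Then among the four corners $\{(i,j),(i,j'),(i',j),(i',j')\}$, exactly three lie in $T$, namely $(i,j), (i,j'), (i',j)$, contradicting the right-triangle-free hypothesis (note $j \neq j'$ and $i \neq i'$, as required in the definition).

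For the backward direction, I would assume every pair of rows is either identical or disjoint, and show $T$ is right-triangle free. Again by contraposition, suppose $T$ contains a right-triangle on $\{(x,y),(x,y'),(x',y),(x',y')\}$ with $x \neq x'$, $y \neq y'$, and exactly three of the four corners in $T$. Look at rows $x$ and $x'$ restricted to columns $y$ and $y'$. The missing corner yields one column where the two rows disagree (so they are not identical), while the two remaining corners on the same column furnish a column where both rows have a $1$ (so they are not disjoint). This contradicts the hypothesis on the pair of rows $x, x'$.

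No real obstacle here: the lemma is essentially a reformulation of the combinatorial condition defining right-triangles, so the whole proof reduces to enumerating which corner of the $2\times 2$ configuration is missing and reading off the resulting structure on the two rows. I would keep the exposition short, presenting both implications as contrapositives sharing the same $2\times 2$ picture.
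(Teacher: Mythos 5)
Your proof is correct: both contrapositive directions are sound, the observation that $j\neq j'$ (since the rows agree at the shared column) ensures the four corners form a genuine $2\times 2$ configuration, and the case $i=i'$ is vacuous since a row is identical to itself. The paper states this lemma without any proof, so your argument simply supplies the routine verification the authors omitted, in exactly the way one would expect.
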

\begin{lemma}\label{lm-tool}
  Let $(F_1,F_2)\subseteq\IntEnt{n_1}\times\IntEnt{n_2}$, and let $T$ and $T'$ be any two states of $\mathrm{M_{F_1,F_2}}$ such that $T\rightarrow T'$. Then $T$ is final if and only if $T'$ is final.
\end{lemma}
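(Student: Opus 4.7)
The plan is to prove the two implications separately. The ``$\Rightarrow$'' direction is immediate: by Definition \ref{def-fleche} we have $T \subseteq T'$, so any cross of $T$ lying in the final zone of $\mathrm{M}_{F_1,F_2}$ is still a cross of $T'$, and hence $T'$ is final whenever $T$ is (both are non-empty, since $T$ must already contain three distinct cells $(i,j),(i',j),(i,j')$, so the final-zone characterization of finality applies).

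For the converse, write $T' = T \cup \{(i',j')\}$ with $\{(i,j),(i',j),(i,j')\} \subseteq T$ as in Definition \ref{def-fleche}. Since $T \neq T'$ forces $(i',j') \notin T$, combining this with $(i',j),(i,j')\in T$ gives $i \neq i'$ and $j \neq j'$. The only cross of $T'$ that does not already belong to $T$ is $(i',j')$, so it suffices to show that if $(i',j')$ lies in the final zone $(F_1\times\IntEnt{n_2}) \oplus (\IntEnt{n_1}\times F_2)$, then at least one of $(i,j), (i',j), (i,j')$ lies there too.

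The key observation is a parity identity on the four corners of the rectangle $\{i,i'\}\times\{j,j'\}$. A cell $(x,y)$ lies in the final zone iff exactly one of ``$x\in F_1$'' and ``$y\in F_2$'' holds. Writing $\alpha = [i\in F_1]$, $\alpha' = [i'\in F_1]$, $\beta = [j\in F_2]$, $\beta' = [j'\in F_2]$ for the corresponding Boolean indicators, the four corners carry the values $\alpha\oplus\beta$, $\alpha'\oplus\beta$, $\alpha\oplus\beta'$, $\alpha'\oplus\beta'$, whose XOR equals $0$. Hence an even number of the four corners lies in the final zone; since $(i',j')$ does, an odd (and therefore nonzero) number of the remaining three must too, which concludes the proof.

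There is no real obstacle here: once one recognizes the final zone as the $\oplus$-pattern determined by a row-bit and a column-bit, the lemma collapses to the one-line parity identity on the four corners of a combinatorial rectangle. The only care needed is to correctly unpack the symmetric difference in the definition of the final zone so that this two-bit XOR structure becomes visible.
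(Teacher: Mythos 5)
Your proof is correct. The paper actually states Lemma \ref{lm-tool} without any proof, so there is nothing to compare against; your argument supplies the missing detail cleanly: the forward direction is just $T\subseteq T'$, and for the converse the observation that a cell $(x,y)$ is final iff $[x\in F_1]\oplus[y\in F_2]=1$, so that the four corners of the rectangle $\{i,i'\}\times\{j,j'\}$ contain an even number of final cells, is exactly the right (and essentially forced) way to see that $(i',j')$ being final drags one of the three cells already in $T$ into the final zone. You also correctly handle the only edge case, namely that the empty tableau is final by fiat but cannot occur here since $T$ contains at least three crosses.
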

Let us recall that the alphabet of $\mathrm{ M_{F_1,F_2}}$ is $\IntEnt{n_1}^{\IntEnt{n_1}}\times \IntEnt{n_2}^{\IntEnt{n_2}}$. If  $(f,g)$ is such a letter and  $T=\{(x_1,y_1),\ldots, (x_n,y_n)\}$ is a tableau, then define $(f,g)(T)$ as $\{(f(x_1),g(y_1)),\ldots, (f(x_n),g(y_n))\}$.
\begin{lemma}\label{lm-tool2}
  Let $(F_1,F_2)\subseteq\IntEnt{n_1}\times\IntEnt{n_2}$, and let $T$ and $T'$ be any two states of $\mathrm{\widehat M_{F_1,F_2}}$ such that $T\rightarrow T'$. Then, for any $a\in \IntEnt{n_1}^{\IntEnt{n_1}}\times \IntEnt{n_2}^{\IntEnt{n_2}} $, $\delta^a(T)\rightarrow \delta^a(T')$ or $\delta^a(T)= \delta^a(T')$.
\end{lemma}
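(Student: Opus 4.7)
The plan is to fix the letter $a = (f, g) \in \IntEnt{n_1}^{\IntEnt{n_1}} \times \IntEnt{n_2}^{\IntEnt{n_2}}$ and trace how the transition $\delta^a$ interacts with the rectangle-completion that defines $\rightarrow$. By hypothesis $T' = T \cup \{(i',j')\}$ with $(i,j), (i',j), (i,j') \in T$; in particular $T$ is non-empty and so is $T'$, so for both of them $\delta^a$ is given by the non-empty branch of the definition of $\mathfrak{StX}$, and we may work throughout with the pointwise image $(f,g)(T)$.

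First I would show that, at the level of the pointwise image, one still has $(f,g)(T) \rightarrow (f,g)(T')$ or $(f,g)(T) = (f,g)(T')$. Indeed $(f,g)(T') = (f,g)(T) \cup \{(f(i'), g(j'))\}$, and a short case split on whether $f(i) = f(i')$, whether $g(j) = g(j')$, and whether $(f(i'), g(j'))$ already lies in $(f,g)(T)$ either collapses the union (yielding equality, since in each degenerate case the new point coincides with one of the three images already present) or exhibits $(f(i), g(j)), (f(i'), g(j)), (f(i), g(j'))$ as three genuine corners of a rectangle in $(f,g)(T)$ around the new point $(f(i'), g(j'))$ (yielding $\rightarrow$).

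Next I would handle the possible injection of $(0,0)$ that $\delta^a$ performs when the image meets the final zone. By Lemma~\ref{lm-tool} applied to $(f,g)(T) \rightarrow (f,g)(T')$ in the non-equality case, the two images are simultaneously final. If neither is, then $\delta^a(T) = (f,g)(T)$ and $\delta^a(T') = (f,g)(T')$ and the previous step already concludes. If both are, then $\delta^a(T) = (f,g)(T) \cup \{(0,0)\}$ and $\delta^a(T') = \delta^a(T) \cup \{(f(i'), g(j'))\}$; either the new point already sits in $\delta^a(T)$ (possibly because it equals $(0,0)$), in which case $\delta^a(T) = \delta^a(T')$, or else the three rectangle witnesses persist inside $(f,g)(T) \subseteq \delta^a(T)$, and thus $\delta^a(T) \rightarrow \delta^a(T')$.

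The main obstacle I expect is the bookkeeping in this last step: one must verify that slipping $(0,0)$ into the image neither inadvertently saturates the point $(f(i'), g(j'))$ that distinguishes $\delta^a(T)$ from $\delta^a(T')$, nor erases one of the rectangle witnesses. Both concerns dissolve once spelled out — the degenerate situation $(f(i'), g(j')) \in (f,g)(T) \cup \{(0,0)\}$ is exactly the one that lands in the equality conclusion of the lemma, while the rectangle witnesses live in $(f,g)(T)$ and are therefore inherited by both $\delta^a(T)$ and $\delta^a(T')$ regardless of whether $(0,0)$ has been added.
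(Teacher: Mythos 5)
Your argument is correct and complete: the paper states Lemma~\ref{lm-tool2} without proof, and your proposal supplies exactly the verification one would expect — pointwise images preserve the rectangle witness (with the degenerate collisions $f(i)=f(i')$, $g(j)=g(j')$, or $(f(i'),g(j'))\in(f,g)(T)$ all landing in the equality branch), and Lemma~\ref{lm-tool} guarantees that $(f,g)(T)$ and $(f,g)(T')$ are simultaneously final, so the $(0,0)$ cross is added to both or to neither, leaving the witness corners intact inside $(f,g)(T)\subseteq\delta^a(T)$. There is nothing in the paper to compare against, and no gap in your reasoning.
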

\begin{proposition}\label{prop-undistinguishable}
  Let $(F_1,F_2)\subseteq\IntEnt{n_1}\times\IntEnt{n_2}$, and let $T,T'$ be two states of $\mathrm{ M_{F_1,F_2}}$. If $T \overset{*}{\leftrightarrow} T'$, then $T$ and $T'$ are not distinguishable.
\end{proposition}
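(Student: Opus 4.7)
The plan is to reduce the statement to the one-step case $T \rightarrow T'$. Since Nerode equivalence $\sim_{Ner}$ is symmetric, reflexive, and transitive, and $\overset{*}{\leftrightarrow}$ is the least equivalence containing $\rightarrow$, it suffices to prove that $T \rightarrow T' \Rightarrow T \sim_{Ner} T'$; the extension to $\overset{*}{\leftrightarrow}$ then comes for free.

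Assuming $T \rightarrow T'$, I would prove the following stronger claim by induction on $|w|$: for every word $w \in \Sigma^*$, either $\delta^w(T) = \delta^w(T')$ or $\delta^w(T) \rightarrow \delta^w(T')$. The base case $w = \varepsilon$ is just the hypothesis. For the inductive step, if $\delta^w(T) = \delta^w(T')$ then applying one more letter preserves equality trivially. If instead $\delta^w(T) \rightarrow \delta^w(T')$, Lemma \ref{lm-tool2} gives directly that $\delta^{wa}(T) \rightarrow \delta^{wa}(T')$ or $\delta^{wa}(T) = \delta^{wa}(T')$. Once this claim is established for all $w$, Lemma \ref{lm-tool} finishes the argument: in both alternatives, $\delta^w(T)$ is final if and only if $\delta^w(T')$ is, which is exactly the definition of $T \sim_{Ner} T'$.

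The only real subtlety is that Lemma \ref{lm-tool2} is stated for states of $\mathrm{\widehat M}_{F_1,F_2}$ whereas the proposition concerns arbitrary states of $\mathrm{M}_{F_1,F_2}$. This is handled by observing, directly from the formula defining $\mathfrak{StX}$, that for any state $T$ and any letter $a$ the image $\delta^a(T)$ belongs to $\mathrm{\widehat M}_{F_1,F_2}$: as soon as $(\delta_1^a,\delta_2^a)(T)$ meets the final zone, the pair $(i_1,i_2)$ is forcibly adjoined, placing a cross at $(0,0)$ in the image. Hence the induction runs entirely inside $\mathrm{\widehat M}_{F_1,F_2}$ from step one onward, and the case $w = \varepsilon$ is handled by Lemma \ref{lm-tool}, which applies to arbitrary states of $\mathrm{M}_{F_1,F_2}$. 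I do not foresee any deeper obstacle: the two preceding lemmas are doing the real work, and the proposition is essentially a bookkeeping step that lifts a local property (preservation under a single $\rightarrow$) to a global one (indistinguishability).
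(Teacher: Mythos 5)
Your proof is correct and follows essentially the same route as the paper: an induction on $|w|$ showing that $\delta^w(T)=\delta^w(T')$ or $\delta^w(T)\rightarrow\delta^w(T')$ via Lemma \ref{lm-tool2}, then Lemma \ref{lm-tool} to conclude $T\sim_{Ner}T'$ for a single $\rightarrow$-step, and finally the closure argument for $\overset{*}{\leftrightarrow}$. You are in fact more careful than the paper about the mismatch between states of $\mathrm{M}_{F_1,F_2}$ and of $\mathrm{\widehat M}_{F_1,F_2}$ in the hypotheses of Lemma \ref{lm-tool2} (the paper applies it silently); the only residual quibble is the very first inductive step when $T,T'\notin\mathrm{\widehat M}_{F_1,F_2}$, which is harmless since the lemma's argument never uses that membership.
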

\begin{proof}
  From Lemma \ref{lm-tool2}, it is easy to see by a simple induction that,  for any word $w$   if $T\rightarrow T'$ then $\delta^w(T)\rightarrow \delta^w(T')$ or $\delta^w(T)= \delta^w(T')$.
  From Lemma \ref{lm-tool}, if $T\rightarrow T'$, then $T\sim_{Ner} T'$ in the sense of the Nerode equivalence. Thus, as $\overset{*}{\leftrightarrow}$ is the symmetric and transitive closure of $\rightarrow$, $T\overset{*}{\leftrightarrow} T'$ implies $T\sim_{Ner} T'$.
\end{proof}
\begin{lemma}\label{lemma-dist}
  All states of $ (\mathrm{\widehat M}_{\{n_{1}-1\},\{0\}})_{/\overset{*}{\leftrightarrow}}$ are pairwise distinguishable.
\end{lemma}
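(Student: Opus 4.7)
The plan is to replace $T$ and $T'$ by their saturated representatives---justified by Proposition \ref{prop-undistinguishable}, which guarantees that equivalent states are Nerode-equivalent---and then to exhibit, in each of several cases, a single letter that distinguishes them. So I assume $T$ and $T'$ are right-triangle-free with $T \neq T'$, and by symmetry I fix a witness cross $(i_0, j_0) \in T \setminus T'$. By Lemma \ref{lemma-lines}, each saturated tableau decomposes as a disjoint union of complete blocks $R_\alpha \times C_\alpha$, and likewise for $T'$.

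If $(i_0, j_0) = (0, 0)$, then $T$ is final because $(0,0)$ lies in the final zone, whereas accessibility of $T'$ in $\mathrm{\widehat M}_{\{n_1-1\}, \{0\}}$ combined with $(0, 0) \notin T'$ forces $T'$ to have no cross in the final zone at all, and hence to be non-final. So the empty word already distinguishes them.

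Otherwise, if $i_0 \notin R(T')$ (no cross in row $i_0$ of $T'$), I use the letter $(f, g)$ with $f(i_0) = n_1 - 1$, $f(i) = 0$ for $i \neq i_0$, and $g$ constant equal to some $j^* \in \{1, \ldots, n_2 - 1\}$: the cross $(i_0, j_0) \in T$ is sent to $(n_1 - 1, j^*)$, inside the final zone, while every cross of $T'$ has row $\neq i_0$ and is sent to $(0, j^*)$, outside the final zone. The dual sub-case $j_0 \notin C(T')$ is handled symmetrically with $f$ constant at $0$, $g(j_0) = 0$ and $g(j) = 1$ for $j \neq j_0$, sending $(i_0, j_0)$ to $(0, 0) \in F$ and every cross of $T'$ to $(0, 1) \notin F$.

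The remaining case is $i_0 \in R(T')$ and $j_0 \in C(T')$: since $T'$ is saturated and $(i_0, j_0) \notin T'$, the row $i_0$ and the column $j_0$ must lie in two distinct blocks of $T'$, say $i_0 \in R'_{\beta_1}$ and $j_0 \in C'_{\beta_2}$ with $\beta_1 \neq \beta_2$. I then choose $(f, g)$ with $f(i) = n_1 - 1$ on $R'_{\beta_1}$ and $f(i) = 0$ elsewhere, and $g(j) = 0$ on $C'_{\beta_1}$ with $g(j) = 1$ elsewhere. By the block-disjointness of Lemma \ref{lemma-lines}, every cross of $T'$ in block $\beta_1$ is sent to the corner $(n_1 - 1, 0) \notin F$, and every cross of $T'$ in a block $\beta \neq \beta_1$ is sent to $(0, 1) \notin F$; meanwhile the cross $(i_0, j_0) \in T$ is sent to $(n_1 - 1, 1) \in F$, because $j_0 \in C'_{\beta_2}$ is disjoint from $C'_{\beta_1}$. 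The main obstacle is this last block-based construction, whose correctness relies crucially on the disjointness of blocks of right-triangle-free tableaux.
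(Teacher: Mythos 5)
Your argument is correct and is essentially the paper's own proof: both use the block structure of the saturated tableau $T'$ given by Lemma~\ref{lemma-lines} to construct a single letter $(f,g)$ that collapses every cross of $T'$ onto non-final positions (of the form $(n_1-1,0)$ or $(0,j)$ with $j\neq 0$) while sending the chosen witness cross of $T$ into the final zone --- the paper merely merges your three non-degenerate cases into one formula, taking for $\{i_1,\dots,i_\ell\}$ and $\{j_1,\dots,j_p\}$ the rows and the columns (augmented with $j_0$ itself) of the block of $T'$ meeting column $j_0$. The only caveat, which you share with the paper's proof, is the untreated case $T'=\emptyset$: there $(f,g)(T')=\emptyset$ is by definition a final state of the $\mathfrak{Star}$ construction, so the ``hence non-final'' conclusions in your first two cases implicitly require $T'\neq\emptyset$.
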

\begin{proof}
  Let $\delta$ be the transition function of $ (\mathrm{\widehat M}_{\{n_{1}-1\},\{0\}})_{/\overset{*}{\leftrightarrow}}$.  Let $T$ and $T'$ be the representatives of two states of $(\mathrm{\widehat M}_{\{n_{1}-1\},\{0\}})_{/\overset{*}{\leftrightarrow}}$, such that $T\neq T'$. Let $(i,j)$ be such that $T_{i,j} \neq T'_{i,j}$. Suppose, for example that $T_{i,j}=1$. Take $\{i_{1},\dots,i_{\ell}\}=\{\alpha \mid  T'_{\alpha,j}=1\}$ and $\{j_{1},\dots,j_{p}\}=\{j\}\cup\{\beta \mid T'_{i_{1},\beta}=1\}$. We can see that :
  \begin{enumerate}
  \item By Lemma \ref{lemma-lines}, lines $i_{1},\dots, i_{\ell}$ are the same, as they all have a cross on the column $j$. Columns $\{j_{1},\dots,j_{p}\}$ are also the same, as they all have a cross on line $i_{1}$. It follows that, if $(i',j')\in \left(\{i_{1},\dots,i_{\ell}\}\times \left(\{0,\dots,n_2-1\}\setminus \{j_{1},\dots,j_{p}\} \right)\right)\cup \left(\left(\{0,\dots,n_1-1\}\setminus \{i_{1},\dots,i_{\ell}\} \right)\times \{j_{1},\dots,j_{p}\}\right)$, then $T'_{i',j'}=0$
  \item $j\in\{j_{1},\dots,j_{p}\}$ and $i\not\in\{i_{1},\dots,i_{\ell}\}$.
  \end{enumerate}

  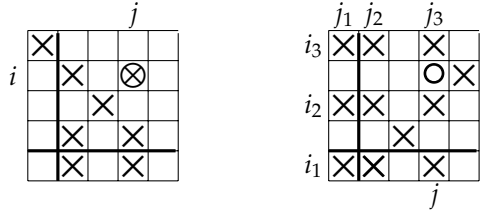
\begin{figure}
    \centerline{
                  \begin{tikzpicture}[scale=0.4]
                    \draw[step=1.0,black, thin] (0,0) grid (5,5);
                    \draw[black,very thick] (0,1) -- (5,1);
                    \draw[black,very thick] (1,0) -- (1,5);
                    \node[scale=1.5] at (0.5,4.5) {$\times$};
                    \node[scale=1.5] at (3.5,1.5) {$\times$};
                    \node[scale=1.5] at (1.5,1.5) {$\times$};
                    \node[scale=1.5,black] at (3.5,3.5) {$\otimes$};
                    \node[scale=1.5,black] at (1.5,3.5) {$\times$};
                    \node[scale=1.5,black] at (2.5,2.5) {$\times$};
                    \node at (3.5,5.5) {$j$};
                    \node at (-0.5,3.5) {$i$};
                    \draw[step=1.0,black, thin] (10,0) grid (15,5);
                    \draw[black,very thick] (10,1) -- (15,1);
                    \draw[black,very thick] (11,0) -- (11,5);
                    \node[scale=1.5] at (12.5,1.5) {$\times$};
                    \node[scale=2,black] at (13.5,3.5) {$\circ$};
                    \node[scale=1.5] at (11.5,4.5) {$\times$};
                    \node[scale=1.5] at (11.5,2.5) {$\times$};
                    \node[scale=1.5] at (11.5,0.5) {$\times$};
                    \node[scale=1.5] at (13.5,4.5) {$\times$};
                    \node[scale=1.5] at (13.5,2.5) {$\times$};
                    \node[scale=1.5] at (13.5,0.5) {$\times$};
                    \node[scale=1.5] at (14.5,3.5) {$\times$};
                    \node[scale=1.5] at (11.5,0.5) {$\times$};
                    \node[scale=1.5] at (10.5,4.5) {$\times$};
                    \node[scale=1.5] at (10.5,2.5) {$\times$};
                    \node[scale=1.5] at (10.5,0.5) {$\times$};
                    \node[scale=1.5] at (3.5,0.5) {$\times$};
                    \node[scale=1.5] at (1.5,0.5) {$\times$};
                    \node at (10.5,5.5) {$j_1$};
                    \node at (11.5,5.5) {$j_2$};
                    \node at (13.5,5.5) {$j_3$};
                    \node at (9.5,0.5) {$i_1$};
                    \node at (9.5,2.5) {$i_2$};
                    \node at (9.5,4.5) {$i_3$}; 
                    \node at (13.5,-0.5) {$j$};
                  \end{tikzpicture}}
                  \caption{An example of two tableaux $T$ and $T'$}
  \end{figure}

Let $f(i')=\left\{\begin{array}{ll}
n_1-1&\text{if }i'\in\{i_{1},\dots, i_{\ell}\},\\
0&\text{otherwise,}
	\end{array}\right.$
and
$g(j')=\left\{\begin{array}{ll}
0&\text{if }j'\in\{j_{1},\dots, j_{p}\}\\
n_2-1&\text{otherwise.}
\end{array}\right.$
 
If $(f(i'),g(j'))$ is in the final zone of $\mathrm{M}_{\{n_{1}-1\},\{0\}}$, then 
\begin{gather*}
  (i',j')\in \left(\{i_{1},\dots,i_{\ell}\}\times \left(\{0,\dots,n_2-1\}\setminus \{j_{1},\dots,j_{p}\} \right)\right) \cup \left(\left(\{0,\dots,n_1-1\}\setminus \{i_{1},\dots,i_{\ell}\} \right)\times \{j_{1},\dots,j_{p}\}\right),
\end{gather*} and so the first point above gives us $T'_{i',j'}=0$.

Therefore, $\delta^{(f,g)}(T')$ has only at most two crosses, one in $(n_1-1,0)$ and one in $(0,n_{2}-1)$, and it is not final. However, the second point above and the fact that $T_{i,j}=1$ gives us that $\delta^{(f,g)}(T)_{0,0}=1$, which means that $\delta^{(f,g)}(T)$ is final. Thus, $T$ and $T'$ are distinguishable. 
\end{proof}
Proposition \ref{prop-undistinguishable} and Lemma \ref{lemma-dist} give us that $ (\mathrm{\widehat M}_{\{n_{1}-1\},\{0\}})_{/\overset{*}{\leftrightarrow}}$ is the minimal DFA equivalent to the DFA $ \mathrm{\widehat M}_{\{n_{1}-1\},\{0\}}$. The following corollary stems from this assertion combined with Lemma \ref{lemma-acc}.
\begin{corollary}\label{cor-min}
  $ (\mathrm{\widehat M}_{\{n_{1}-1\},\{0\}})_{/\overset{*}{\leftrightarrow}}$ is the minimal DFA equivalent to $\mathrm{M}_{\{n_{1}-1\},\{0\}}$.
\end{corollary}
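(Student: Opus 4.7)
The plan is to assemble the corollary directly from the three preceding results: the description of accessible states in Lemma \ref{lemma-acc}, the soundness direction in Proposition \ref{prop-undistinguishable}, and the completeness direction in Lemma \ref{lemma-dist}. Recall that the minimal DFA equivalent to any DFA is obtained in two canonical steps, namely (i) restricting to the accessible part, and (ii) quotienting the result by Nerode equivalence. So I would first rewrite the claim in that shape and then verify each step in order.

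First I would apply Lemma \ref{lemma-acc}, which characterizes the accessible states of $\mathrm{M}_{\{n_{1}-1\},\{0\}}$ as exactly those tableaux $T$ such that, whenever $T$ has a cross in the final zone of $\mathrm{M}_{\{n_{1}-1\},\{0\}}$, it also has a cross at $(0,0)$. By the very definition of $\mathrm{\widehat M}_{\{n_{1}-1\},\{0\}}$ this set of states is the state set of $\mathrm{\widehat M}_{\{n_{1}-1\},\{0\}}$, and Remark \ref{remark-acc} together with the observation that $\mathrm{\widehat M}_{\{n_{1}-1\},\{0\}}$ is stable under the transition function (which follows from Lemmas \ref{lemma-acc} and is implicit in the proof of accessibility) confirms that $\mathrm{\widehat M}_{\{n_{1}-1\},\{0\}}$ is precisely the accessible part of $\mathrm{M}_{\{n_{1}-1\},\{0\}}$. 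Step (i) of minimization therefore yields $\mathrm{\widehat M}_{\{n_{1}-1\},\{0\}}$.

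Next I would establish that on the state set of $\mathrm{\widehat M}_{\{n_{1}-1\},\{0\}}$ the equivalence $\overset{*}{\leftrightarrow}$ coincides with the Nerode equivalence $\sim_{Ner}$. Proposition \ref{prop-undistinguishable} gives the inclusion $\overset{*}{\leftrightarrow}\;\subseteq\;\sim_{Ner}$ directly: any two tableaux related by $\overset{*}{\leftrightarrow}$ are Nerode-equivalent. For the reverse inclusion I would use Lemma \ref{lemma-dist}, which says that any two distinct $\overset{*}{\leftrightarrow}$-classes in $\mathrm{\widehat M}_{\{n_{1}-1\},\{0\}}$ are distinguishable. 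Taking contrapositives, two tableaux that are Nerode-equivalent must lie in the same $\overset{*}{\leftrightarrow}$-class, so $\sim_{Ner}\;\subseteq\;\overset{*}{\leftrightarrow}$. Combining the two inclusions shows that the quotient $(\mathrm{\widehat M}_{\{n_{1}-1\},\{0\}})_{/\overset{*}{\leftrightarrow}}$ agrees with the Nerode quotient of the accessible part, so step (ii) of the minimization yields exactly this DFA.

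There is essentially no hard step to overcome at this point, since the heavy lifting has already been done: the combinatorial description of accessible states sits in Lemma \ref{lemma-acc}, the stability of $\overset{*}{\leftrightarrow}$ under transitions sits in Lemma \ref{lm-tool2}, and the witness-letter construction distinguishing non-equivalent classes sits in Lemma \ref{lemma-dist}. The only point that deserves a sentence of care is the closure of $\mathrm{\widehat M}_{\{n_{1}-1\},\{0\}}$ under transitions (so that it truly is a sub-DFA and the quotient makes sense); I would dispatch this by noting that the transition function of $\mathfrak{StX}$ adds $(0,0)$ to any image that meets the final zone, so the defining property of $\mathrm{\widehat M}_{\{n_{1}-1\},\{0\}}$ is preserved.
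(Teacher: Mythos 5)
Your proposal is correct and follows the same route as the paper, which likewise obtains the corollary by combining Lemma \ref{lemma-acc} (the accessible part of $\mathrm{M}_{\{n_{1}-1\},\{0\}}$ is exactly $\mathrm{\widehat M}_{\{n_{1}-1\},\{0\}}$), Proposition \ref{prop-undistinguishable} (soundness of $\overset{*}{\leftrightarrow}$ for Nerode equivalence) and Lemma \ref{lemma-dist} (pairwise distinguishability of the quotient's states). Your extra remark on closure of $\mathrm{\widehat M}_{\{n_{1}-1\},\{0\}}$ under transitions is a point the paper leaves implicit in Remark \ref{remark-acc}, and it is handled correctly.
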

\subsection{Computing the state complexity of the language recognized by $\mathrm{M}_{\{n_{1}-1\},\{0\}}$}
The number of right-triangle free tableaux $T$ of size $\IntEnt{n_1}\times\IntEnt{n_2}$ such that, if $T$ has a cross in the final zone of $\mathrm{M}_{\{n_{1}-1\},\{0\}}$, then $T$ has cross at $(0,0)$ is exactly $2\alpha_{n_1-1,n_2-1}+\alpha'_{n_1,n_2}$ where $\alpha_{x,y}$ is the number of right-triangle free tableaux of size $x\times y$ and $\alpha'_{x,y}$ the number of right-triangle free tableaux of size $x\times y$ having a cross in $(0,0)$. Therefore,
\begin{lemma}\label{lemma-comp}
  The state complexity of $\mathrm L(\mathrm{M}_{\{n_{1}-1\},\{0\}})$ is $2\alpha_{n_1-1,n_2-1}+\alpha'_{n_1,n_2}$.
\end{lemma}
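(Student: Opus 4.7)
The plan is to use Corollary \ref{cor-min} to reduce the computation to a counting problem: the state complexity of $\mathrm L(\mathrm M_{\{n_1-1\},\{0\}})$ equals $\#(\mathrm{\widehat M}_{\{n_1-1\},\{0\}})_{/\overset{*}{\leftrightarrow}}$, and each $\overset{*}{\leftrightarrow}$-class contains a unique right-triangle free representative, namely $\mathrm{Sat}(T)$. Hence it suffices to count the right-triangle free tableaux that belong to $\mathrm{\widehat M}_{\{n_1-1\},\{0\}}$, i.e.\ that satisfy Lemma \ref{lemma-acc}'s condition: a cross in the final zone forces a cross at $(0,0)$.

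First I would write the final zone explicitly for $(F_1,F_2)=(\{n_1-1\},\{0\})$, namely $(\{n_1-1\}\times(\IntEnt{n_2}\setminus\{0\}))\cup((\IntEnt{n_1}\setminus\{n_1-1\})\times\{0\})$, observing that $(0,0)$ is in the final zone while $(n_1-1,0)$ is not. I would then split the count according to $T_{0,0}$. If $T_{0,0}=1$, the accessibility condition is automatic, and the representatives are exactly the right-triangle free tableaux of size $n_1\times n_2$ with a cross at $(0,0)$, contributing $\alpha'_{n_1,n_2}$. If $T_{0,0}=0$, the accessibility condition forbids every cross in the final zone, so crosses are confined to the $(n_1-1)\times(n_2-1)$ subgrid on rows $\{0,\dots,n_1-2\}$ and columns $\{1,\dots,n_2-1\}$, together with the optional isolated cell $(n_1-1,0)$. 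That cell shares neither its row nor its column with any cell of the subgrid, so it cannot sit on any right-triangle with subgrid cells; consequently right-triangle free tableaux of this shape decompose as an independent binary choice at $(n_1-1,0)$ times an arbitrary right-triangle free filling of the subgrid, giving $2\alpha_{n_1-1,n_2-1}$. Summing yields $2\alpha_{n_1-1,n_2-1}+\alpha'_{n_1,n_2}$.

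The main point requiring care, which I regard as the only real subtlety, is checking that $\mathrm{Sat}$ preserves membership in $\mathrm{\widehat M}_{\{n_1-1\},\{0\}}$, so that the chosen representatives are themselves accessible. In the case $T_{0,0}=1$ this is immediate since saturation only adds crosses. In the case $T_{0,0}=0$ one must verify that saturation cannot produce a cross at $(0,0)$ nor in the final zone; this follows from the same isolation observation applied to the generating rule $\{(i,j),(i,j'),(i',j)\}\subseteq T$, since such a triple must live entirely in the subgrid, forcing the added cross $(i',j')$ to remain in the subgrid as well.
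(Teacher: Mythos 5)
Your proposal is correct and takes the same route the paper does: the paper derives Lemma \ref{lemma-comp} from Corollary \ref{cor-min} together with the (stated but unproved) count of right-triangle free tableaux satisfying the accessibility condition, which is exactly the quantity $2\alpha_{n_1-1,n_2-1}+\alpha'_{n_1,n_2}$ you compute. Your case split on $T_{0,0}$, the isolation of the cell $(n_1-1,0)$ from the $(n_1-1)\times(n_2-1)$ subgrid, and the check that $\mathrm{Sat}$ preserves membership in $\mathrm{\widehat M}_{\{n_1-1\},\{0\}}$ supply precisely the details the paper leaves implicit.
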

Closed formulas for $\alpha(x,y)$ and $\alpha'(x,y)$ are given in Corollary $20$ and Proposition $22$ of  \cite{CLMP15}.

In the next subsection, we prove that $(\{n_{1}-1\},\{0\})$ is a couple of final states that maximizes the size of the minimal DFA associated to any $\mathrm{M}_{F_1,F_2}$, with $(F_1,F_2)\subseteq\IntEnt{n_1}\times\IntEnt{n_2}$.
\subsection{Maximizing the state complexity of $\ostar$ applied to monster $2$-languages}
Let $\mathcal T$ be the set of right-triangle free tableaux of size $n_1\times n_2$. For all $(F_1,F_2)\subseteq\IntEnt{n_1}\times\IntEnt{n_2}$, let \[\mathcal T_{F_1,F_2}=\#(\mathrm{\widehat M}_{F_1,F_2})_{/\overset{*}{\leftrightarrow}}=\#\{T\in \mathcal T\mid T \text{ has a cross in the final } \text{zone implies } T_{0,0}=1\}.\]
We show that :
\begin{lemma}\label{lemma-max}
  For any $F_1\times F_2\subseteq \IntEnt{n_1}\times\IntEnt{n_2}$ such that $F_1,F_2\neq\emptyset$ and $F_1\neq\IntEnt{n_1}$,$F_2\neq\IntEnt{n_2}$, $\mathcal T_{F_1,F_2}\leq \mathcal T_{\{n_1-1\},\{0\}}$.
\end{lemma}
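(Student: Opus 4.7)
The strategy is to split each set $\mathcal T_{F_1,F_2}$ according to the value of $T_{0,0}$ and reduce the lemma to a single numerical inequality on $\alpha$.

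Tableaux in $\mathcal T_{F_1,F_2}$ with $T_{0,0}=1$ automatically satisfy the defining implication, hence contribute exactly $\alpha'_{n_1,n_2}$ regardless of $(F_1,F_2)$. Tableaux with $T_{0,0}=0$ have no cross in the final zone of $\mathrm{M}_{F_1,F_2}$, so their support lies in the complementary ``non-final'' region $R_1\cup R_2$ with $R_1=F_1\times F_2$ and $R_2=(\IntEnt{n_1}\setminus F_1)\times(\IntEnt{n_2}\setminus F_2)$. Since $R_1$ and $R_2$ have no row and no column in common, Lemma \ref{lemma-lines} implies that every right-triangle-free tableau supported in $R_1\cup R_2$ decomposes uniquely as a pair of right-triangle-free tableaux, one on each rectangle; writing $a_i=|F_i|$, the total count before imposing $T_{0,0}=0$ is therefore $\alpha_{a_1,a_2}\cdot\alpha_{n_1-a_1,n_2-a_2}$, and imposing $T_{0,0}=0$ can only lower it. For the reference couple $(\{n_1-1\},\{0\})$ one has $a_1=a_2=1$ and $(0,0)$ already lies in the final zone, so the constraint is automatic and the second piece has cardinality exactly $\alpha_{1,1}\alpha_{n_1-1,n_2-1}=2\alpha_{n_1-1,n_2-1}$, matching Lemma \ref{lemma-comp}.

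The lemma therefore reduces to the key inequality
\[
\alpha_{a_1,a_2}\cdot\alpha_{n_1-a_1,n_2-a_2}\;\leq\;2\,\alpha_{n_1-1,n_2-1}\qquad(1\leq a_i\leq n_i-1),
\]
which is tight: equality holds at $(a_1,a_2)=(1,1)$ and, by the symmetry $(a_1,a_2)\leftrightarrow(n_1-a_1,n_2-a_2)$, also at $(a_1,a_2)=(n_1-1,n_2-1)$. I would attack this inequality by invoking the closed-form expression of $\alpha_{x,y}$ from Corollary 20 of \cite{CLMP15}, in combination with the elementary monotonicity $\alpha_{x,y}\geq 2\alpha_{x-1,y}$ (and its symmetric column version), which follows from the observation that every right-triangle-free tableau on $x-1$ rows extends to at least two right-triangle-free tableaux on $x$ rows---for instance by prepending a zero row, or by duplicating any existing row.

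The main obstacle is precisely this key inequality: naive iteration of the monotonicity bound is too weak to produce the sharp constant $2$, as the bound is already attained at two distinct extremal points; capturing the fact that the product $\alpha_{a_1,a_2}\cdot\alpha_{n_1-a_1,n_2-a_2}$ is maximized exactly when one factor degenerates to $\alpha_{1,1}=2$ requires genuine use of the structure of $\alpha$, either through the explicit Stirling-type formula from \cite{CLMP15} or through a tailored combinatorial injection between the block-decompositions on both sides of the inequality.
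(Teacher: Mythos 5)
Your reduction is sound as far as it goes: splitting $\mathcal T_{F_1,F_2}$ on the value of $T_{0,0}$, noting that the part with $T_{0,0}=1$ contributes $\alpha'_{n_1,n_2}$ independently of $(F_1,F_2)$, and identifying the non-final region as the disjoint union of the rectangles $F_1\times F_2$ and $(\IntEnt{n_1}\setminus F_1)\times(\IntEnt{n_2}\setminus F_2)$ are all correct. The product formula $\alpha_{a_1,a_2}\cdot\alpha_{n_1-a_1,n_2-a_2}$ for right-triangle-free tableaux supported on that union is justified by Lemma \ref{lemma-lines}, since a right triangle cannot straddle two rectangles sharing no row and no column, and your evaluation of the reference case as $2\alpha_{n_1-1,n_2-1}$ (where the constraint $T_{0,0}=0$ is automatic because $(0,0)$ lies in the final zone of $\mathrm{M}_{\{n_1-1\},\{0\}}$) matches Lemma \ref{lemma-comp}.

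The proof is nonetheless incomplete: all of the above is bookkeeping, and the entire mathematical content of the lemma has been displaced into the inequality $\alpha_{a_1,a_2}\,\alpha_{n_1-a_1,n_2-a_2}\leq 2\,\alpha_{n_1-1,n_2-1}$, which you explicitly do not prove. This is not a routine step. The inequality is tight not only at the two corner points you mention but along whole families --- for instance when $n_2=1$ one has $\alpha_{x,1}=2^x$, so equality holds for every admissible $(a_1,a_2)$ --- which rules out any argument based on iterating a crude bound such as $\alpha_{x,y}\geq 2\alpha_{x-1,y}$, as you acknowledge. (As an aside, your justification of that monotonicity by ``prepending a zero row or duplicating an existing row'' degenerates on the empty tableau, where the two images coincide; the bound itself survives because the added row may instead be taken to be any nonempty set disjoint from all existing row supports, or equal to one of them.) What remains unproven is precisely the assertion that the couple $(\{n_1-1\},\{0\})$ maximizes the count, i.e.\ the point of the lemma; numerical checks at small sizes and the observation that the bound is sharp do not substitute for a proof. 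To close the gap you would need to carry out the analysis with the closed formulas of \cite{CLMP15}, or exhibit an explicit injection from the right-triangle-free tableaux supported on the two rectangles into pairs consisting of a bit and a right-triangle-free tableau of size $(n_1-1)\times(n_2-1)$; neither is done here.
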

Therefore, by Remark \ref{remark-acc}, Proposition \ref{prop-undistinguishable}, and Corollary \ref{cor-min}, for any $F_1\times F_2\subseteq \IntEnt{n_1}\times\IntEnt{n_2}$ such that $F_1,F_2\neq\emptyset$ and $F_1\neq\IntEnt{n_1}$,$F_2\neq\IntEnt{n_2}$,
\[ \#_{\min}(\mathrm{M}_{F_1,F_2})\leq \#((\mathrm{\widehat M}_{F_1,F_2})_{/\overset{*}{\leftrightarrow}}) = \mathcal T_{F_1,F_2}\leq \mathcal T_{\{n_1-1\},\{0\}} = \#((\mathrm{\widehat M}_{\{n_1-1\},\{0\}})_{/\overset{*}{\leftrightarrow}}) = \#_{\min}(\mathrm{M}_{\{n_1-1\},\{0\}}).\]
The cases where $F_1=\emptyset$ or $F_2=\emptyset$ or $F_1=\IntEnt{n_1}$ or $F_2=\IntEnt{n_2}$ are easy and proven by :
\begin{lemma}\label{lemma-part}
  If $F_1=\emptyset$ or $F_2=\emptyset$ or $F_1=\IntEnt{n_1}$ or $F_2=\IntEnt{n_2}$, then $\#_{\min}(\mathrm{M}_{F_1,F_2})\leq \#_{\min}(\mathrm{M}_{\{n_1-1\},\{0\}})$.
\end{lemma}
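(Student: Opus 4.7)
The plan is to reduce each boundary configuration of $(F_1,F_2)$ to a Kleene star of a single language whose state complexity is governed by only one of $n_1,n_2$, and then invoke a standard upper bound for the state complexity of Kleene star.

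First, I would observe that $F_1 = \emptyset$ forces $L(\mathds{M}_1) = \emptyset$, so $L(\mathds{M}_1) \oplus L(\mathds{M}_2) = L(\mathds{M}_2)$, and symmetrically when $F_2 = \emptyset$. Likewise, $F_1 = \IntEnt{n_1}$ forces $L(\mathds{M}_1) = \Sigma^*$, giving $L(\mathds{M}_1) \oplus L(\mathds{M}_2) = \Sigma^* \setminus L(\mathds{M}_2)$, and the case $F_2 = \IntEnt{n_2}$ is symmetric. Since complementation preserves the size of a DFA, in each boundary case $L(\mathrm{M}_{F_1,F_2}) = K^*$ for a regular language $K$ of state complexity at most $\max(n_1,n_2)$.

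Using the classical Kleene star upper bound $\sc(K^*) \leq 3 \cdot 2^{\sc(K)-2}$, I would deduce
\[ \#_{\min}(\mathrm{M}_{F_1,F_2}) \leq 3 \cdot 2^{\max(n_1,n_2)-2}. \]
By Corollary \ref{cor-min} and Lemma \ref{lemma-comp}, the right-hand side of the inequality we wish to establish equals $2\alpha_{n_1-1,n_2-1} + \alpha'_{n_1,n_2}$, a quantity that grows at least exponentially in both parameters (for example, every tableau whose rows are each either empty or equal to one fixed non-empty row pattern is automatically right-triangle free, already producing many such tableaux that satisfy the accessibility constraint used to define $\mathcal T_{\{n_1-1\},\{0\}}$).

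The main obstacle is this last numerical comparison between $3 \cdot 2^{\max(n_1,n_2)-2}$ and $2\alpha_{n_1-1,n_2-1} + \alpha'_{n_1,n_2}$. Once any reasonable lower bound on $\alpha_{x,y}$ is in place, the dominance holds for $n_1,n_2 \geq 2$ (which is implicit in the hypothesis, since otherwise $\{n_1-1\}$ or $\{0\}$ would itself be a boundary set), and the few small cases may be verified by direct inspection using Remark \ref{remark-min}. I expect this combinatorial step to be routine but slightly tedious, whereas the conceptual content of the lemma is entirely captured by the degeneration of the symmetric difference in the four boundary cases.
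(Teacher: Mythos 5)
Your proposal is correct in substance, but note that the paper itself offers no proof of this lemma at all: the authors simply declare the four boundary cases ``easy'' and state the result, so there is nothing to match your argument against. The degeneration you identify is surely the intended content: $F_1=\emptyset$ (resp.\ $F_1=\IntEnt{n_1}$) forces $\mathrm L(\mathds M_1)=\emptyset$ (resp.\ $\Sigma^*$), so $\mathrm L(\mathrm M_{F_1,F_2})$ collapses to $K^*$ with $K$ equal to $\mathrm L(\mathds M_2)$ or its complement, hence $\mathrm{sc}(K)\leq n_2$, and symmetrically in the other two cases; the classical bound $\mathrm{sc}(K^*)\leq 3\cdot 2^{\mathrm{sc}(K)-1}/2+\ldots$, i.e.\ $2^{m-1}+2^{m-2}$ for $m=\mathrm{sc}(K)\geq 2$, then gives $\#_{\min}(\mathrm M_{F_1,F_2})\leq 3\cdot 2^{\max(n_1,n_2)-2}$. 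Two points that you defer should be made explicit, because one of them hides a genuine restriction. First, the final comparison really does require $n_1,n_2\geq 2$: for $n_2=1$ one computes $2\alpha_{n_1-1,0}+\alpha'_{n_1,1}=2+2^{n_1-1}$, which is strictly smaller than $3\cdot 2^{n_1-2}$ once $n_1\geq 4$, and since $\mon_{n_1,1}^{\{n_1-1\},\emptyset}$ yields exactly the star of a $1$-monster star-witness, the inequality of the lemma actually fails there. So the lemma, and with it the main theorem, must be read under the standing hypothesis $n_1,n_2\geq 2$ --- which you correctly observe is forced if $\{n_1-1\}$ and $\{0\}$ are to be proper non-empty final sets. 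Second, the ``routine but tedious'' estimate closes in one line and needs no case inspection: right-triangle-freeness is invariant under transposition, and every tableau whose crosses all lie in one fixed column is right-triangle free, so $\alpha_{n_1-1,n_2-1}\geq 2^{\max(n_1,n_2)-1}$ and hence $2\alpha_{n_1-1,n_2-1}\geq 2^{\max(n_1,n_2)}>3\cdot 2^{\max(n_1,n_2)-2}$. With those two sentences supplied, your argument is a complete and correct proof of the lemma.
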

Therefore, by Theorem \ref{th-mon2} and Lemma \ref{lemma-comp},
\begin{theorem}
  The state complexity of $\ostar$ is $2\alpha_{n_1-1,n_2-1}+\alpha'_{n_1,n_2}$, \emph{i.e.} for all $n_1,n_2\in \mathbb N^*$, $sc_{\ostar}(n_1,n_2)=2\alpha_{n_1-1,n_2-1}+\alpha'_{n_1,n_2}$.
\end{theorem}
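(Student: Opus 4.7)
The plan is to assemble the machinery developed throughout the section into a single clean upper and matching lower bound. Since the preceding discussion established that $\ostar$ is a $1$-uniform $2$-ary operation (as a composition of the $1$-uniform operations $\oplus$ and $(\cdot)^*$, via Proposition \ref{prop-comp}), Theorem \ref{th-mon2} guarantees that the supremum defining $\sc_\ostar(n_1,n_2)$ is realized by a family of $2$-monster languages. Concretely, one has
\[
\sc_\ostar(n_1,n_2)=\max_{F_1\subseteq\IntEnt{n_1},\,F_2\subseteq\IntEnt{n_2}}\#_{\min}(\mathrm{M}_{F_1,F_2}),
\]
where each $\mathrm{M}_{F_1,F_2}=\mathfrak{StX}(\mon_{n_1,n_2}^{F_1,F_2})$ recognizes $\ostar$ applied to the corresponding pair of monster languages.

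First I would reduce this maximum to the distinguished choice $(F_1,F_2)=(\{n_1-1\},\{0\})$. For the non-degenerate cases $F_1,F_2\notin\{\emptyset,\IntEnt{n_1}\},\{\emptyset,\IntEnt{n_2}\}$, Remark \ref{remark-acc} says that every accessible state of $\mathrm{M}_{F_1,F_2}$ lies in $\mathrm{\widehat M}_{F_1,F_2}$; Proposition \ref{prop-undistinguishable} then bounds $\#_{\min}(\mathrm{M}_{F_1,F_2})$ above by $\#(\mathrm{\widehat M}_{F_1,F_2})_{/\overset{*}{\leftrightarrow}}=\mathcal T_{F_1,F_2}$, and Lemma \ref{lemma-max} yields $\mathcal T_{F_1,F_2}\leq\mathcal T_{\{n_1-1\},\{0\}}$. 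The degenerate cases are handled by Lemma \ref{lemma-part}. Chained together, these inequalities give
\[
\#_{\min}(\mathrm{M}_{F_1,F_2})\leq\#_{\min}(\mathrm{M}_{\{n_1-1\},\{0\}})
\]
for every admissible pair $(F_1,F_2)$.

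Next I would establish the matching lower bound by exhibiting the distinguished monster as a concrete witness. Lemma \ref{lemma-acc} identifies the accessible states of $\mathrm{M}_{\{n_1-1\},\{0\}}$ with the tableaux of $\mathrm{\widehat M}_{\{n_1-1\},\{0\}}$, and Corollary \ref{cor-min} says that $(\mathrm{\widehat M}_{\{n_1-1\},\{0\}})_{/\overset{*}{\leftrightarrow}}$ is its minimization. Lemma \ref{lemma-comp} then counts the states of this minimal DFA, giving $\#_{\min}(\mathrm{M}_{\{n_1-1\},\{0\}})=2\alpha_{n_1-1,n_2-1}+\alpha'_{n_1,n_2}$. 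Combining this with the previous paragraph yields $\sc_\ostar(n_1,n_2)=2\alpha_{n_1-1,n_2-1}+\alpha'_{n_1,n_2}$, as claimed.

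The only subtlety is bookkeeping: one must ensure that the maximum appearing in the definition of state complexity is taken over all minimal DFAs of state complexity exactly $n_i$, not over arbitrary pairs $(F_1,F_2)$. Here Remark \ref{remark-min} saves the day, since $\mon_{n_j}^{F_j}$ is already minimal of size $n_j$ whenever $F_j\notin\{\emptyset,\IntEnt{n_j}\}$, and the degenerate cases produce languages of strictly smaller state complexity, hence cannot affect the maximum. No step is genuinely hard in itself; the only real obstacle is making sure the two sides of the inequality refer to exactly the same quantity, which is why the explicit inclusion $\#_{\min}(\mathrm{M}_{F_1,F_2})\leq\mathcal T_{F_1,F_2}$ has to be invoked rather than just the $\mathcal T$-counts.
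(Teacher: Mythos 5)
Your proposal is correct and follows essentially the same route as the paper: the paper's proof is precisely the combination of Theorem \ref{th-mon2}, the chain of inequalities $\#_{\min}(\mathrm{M}_{F_1,F_2})\leq\#((\mathrm{\widehat M}_{F_1,F_2})_{/\overset{*}{\leftrightarrow}})=\mathcal T_{F_1,F_2}\leq\mathcal T_{\{n_1-1\},\{0\}}=\#_{\min}(\mathrm{M}_{\{n_1-1\},\{0\}})$ via Remark \ref{remark-acc}, Proposition \ref{prop-undistinguishable}, Lemmas \ref{lemma-max} and \ref{lemma-part}, and finally Lemma \ref{lemma-comp} for the count. Your added remark about Remark \ref{remark-min} guaranteeing that the non-degenerate monsters have state complexity exactly $n_j$ is a point the paper leaves implicit, but it does not change the argument.
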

\section{Witnesses with a bounded alphabet size}\label{sec-borne}
We now prove that there is a finite-bounded-alphabet witness. Let $n_1,n_2$ be two positive integers and let $(\mathds M_1,\mathds M_2)=\mon_{n_1,n_2}^{\{n_{1}-1\},\{0\}}$. Recall that the letters of $\mon_{n_1,n_2}^{\{n_{1}-1\},\{0\}}$ are couples of mappings and that $\mathds{1}$ is the identities bot in $\IntEnt{n_1}$ and in $\IntEnt{n_2}$. Let $B_1$ and $B_2$ be the DFAs obtained by restricting the letters of respectively  $\mathds M_1$ and $\mathds M_2$ to the alphabet
\[\begin{array}{ll}  \Sigma'=&\left\{((0,\ldots ,n_1-2),\mathds{1}), ((1,\dots, n_1-2),\mathds{1}),(\mathds{1},(1,\dots, n_2-2)),((1,\dots, n_1-1),\mathds{1}),\color{white}{\binom a b}\right.\\
          &(\mathds{1},(1,\dots,n_2-1)),
          ((0,n_1-1),\mathds{1}),(\mathds{1},(0,n_2-1)),((0,1),(0,1)),((0,1),\mathds{1}),(\mathds{1},(0,1)),\\
          &\left.((n_1-2,n_1-1),\mathds{1}), (\left(1\atop 0\right),\mathds{1}),
          (\mathds{1},\left(1\atop 0\right)),
          (\left(n_1-2\atop n_1-1\right),\mathds{1}),(\mathds{1},\left(n_2-2 \atop n_2-1\right)),(\left(n_1-1\atop 0\right),\mathds{1}),(\mathds{1},\left(n_2-1 \atop 0\right))\right\}.
\end{array}\]
Let $B=\mathfrak{StX}(B_1,B_2)$, and $\widehat B$ be the DFA obtained by restricting $B$ to states $T$ such that, if $T$ has a cross in the final zone of $\mathrm{M}_{\{n_1-1\},\{0\}}$, then $T$ has cross at $(0,0)$. The DFA $A=\widehat B_{/ \overset{*}{\leftrightarrow}}$ is obtained by restricting the letters of $\mathrm{\widehat M}_{\{n_{1}-1\},\{0\}}$ to the alphabet $\Sigma'$. We are going to show that $A$ is minimal.
        
Let us recall that all letters of $\Sigma'$ can be seen as a function acting on tableaux. Every word $w$ of $\Sigma'$ acts on a tableau $T$ by applying the composition of all letters of $w$ to $T$ : if $w=a_1\ldots a_n$, define $w(T)=a_n\circ\ldots\circ a_1(T)$. When it exists, we denote by $w^{-1}$ the inverse function of $a_n\circ\ldots\circ a_1$. Let $\delta$ be the transition function of $B$. We first notice that $w(T)$ is not necessarily equivalent to $T'=\delta^{w}(T)$ since $(0,0)$ is in $T'$ if $T'$ has a cross in the final zone. We denote by $w[i,j]$ the subword $a_i\cdots a_j$. By convention, if j<i, $w[i,j]=\varepsilon$. The proof of the following lemma is easy by induction.
\begin{lemma}\label{lemma-sat}
          Let $w$ be a word of $\Sigma'$, and $T$ be a state of $B$. If, for any integer $k<|w|$, we have              
$(w[1,k](T))_{0,0}=1$ or $w[1,k](T)$ has no cross in the final zone, then $\delta^{w}(T)=w(T)$.
\end{lemma}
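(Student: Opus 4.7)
The plan is to prove Lemma \ref{lemma-sat} by induction on the length $|w|$ of $w$. The base case $|w|=0$ is immediate: $w=\varepsilon$, the hypothesis is vacuous, and $\delta^{\varepsilon}(T)=T=\varepsilon(T)$.

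For the inductive step I would write $w=w'\cdot a$, where $a\in\Sigma'$ is the final letter, so $|w'|=|w|-1$. The hypothesis on $w$ restricted to integers $k<|w'|$ is precisely the hypothesis of the lemma applied to $w'$; by the inductive hypothesis, this gives $\delta^{w'}(T)=w'(T)$, hence
\[
\delta^{w}(T)=\delta^{a}(\delta^{w'}(T))=\delta^{a}(w'(T)).
\]
The argument then closes by a case split on whether $a(w'(T))=w(T)$ has a cross in the final zone of $\mathrm{M}_{\{n_1-1\},\{0\}}$. Unfolding the definition of $\delta^{a}$ coming from $\mathfrak{StX}$: if $w(T)$ has no cross in the final zone, then $\delta^{a}(w'(T))=a(w'(T))=w(T)$ directly. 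Otherwise, $\delta^{a}(w'(T))=w(T)\cup\{(0,0)\}$, and the hypothesis applied at the appropriate prefix (giving $(w(T))_{0,0}=1$) collapses this set back to $w(T)$. Either way, $\delta^{w}(T)=w(T)$.

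The only delicate point is the edge case $w'(T)=\emptyset$, which triggers the $E=\emptyset$ branch of $\delta^{a}$. Since every letter of $\Sigma'$ acts on tableaux as a map of pairs of indices, it sends $\emptyset$ to $\emptyset$, so this degeneracy can only occur when $T=\emptyset$; the claim then reduces to verifying a single transition and is handled separately. Consequently, the main obstacle is not conceptual but bookkeeping: one must carefully match the step-by-step trajectory of the $\delta$-iteration to the direct composition $w(T)=a_{n}\circ\cdots\circ a_{1}(T)$, and keep track of which prefixes $w[1,k]$ appear at each step in order to invoke the right instance of the hypothesis.
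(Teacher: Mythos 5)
Your overall strategy --- induction on $|w|$, peeling off the last letter $a$, applying the inductive hypothesis to $w'=w[1,|w|-1]$ to get $\delta^{w'}(T)=w'(T)$, and then unfolding the $\mathfrak{StX}$ transition of $a$ with a case split on whether $a(w'(T))$ meets the final zone --- is exactly the ``easy induction'' the paper alludes to (it gives no further detail), and the matching of prefixes $w[1,k]$ to the steps of the $\delta$-iteration is handled correctly up to the last step.

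The closing step, however, does not go through under the hypothesis as stated. To collapse $w(T)\cup\{(0,0)\}$ back to $w(T)$ you invoke ``the hypothesis applied at the appropriate prefix'' to obtain $(w(T))_{0,0}=1$; but that prefix is $w[1,|w|]=w$ itself, i.e.\ the instance $k=|w|$, whereas the lemma only quantifies over $k<|w|$. This is not a removable blemish: with the hypothesis restricted to $k<|w|$ the statement is actually false. Take $T=\{(0,1)\}$, which has no cross in the final zone (so the only required instance, $k=0$, holds), and the single letter $a=((0,n_1-1),\mathds{1})\in\Sigma'$; then $a(T)=\{(n_1-1,1)\}$ lies in the final zone, so $\delta^{a}(T)=\{(n_1-1,1),(0,0)\}\neq a(T)$. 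So either the hypothesis must be read as holding for every $k\leq|w|$ --- which is what every invocation in Section~\ref{sec-borne} actually supplies, since the target $T'=w(T)$ there always either has $(0,0)$ among its crosses or has no cross in the final zone --- or your final case split cannot be closed; you should say explicitly that you are proving the strengthened statement. Relatedly, your remark on the empty tableau is too optimistic: for $T=\emptyset$ and $w\neq\varepsilon$ one has $w(\emptyset)=\emptyset$ while $\delta^{w}(\emptyset)\neq\emptyset$, so the conclusion fails outright and no ``single transition check'' repairs it; $T$ must be assumed non-empty, as it is in every application of the lemma.
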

        \begin{lemma}\label{lemma-acc-fin}
          All the states of $\widehat B$ are accessible.
        \end{lemma}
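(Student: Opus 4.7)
The plan is to mimic the inductive proof of Lemma \ref{lemma-acc} step by step, but now each one-letter transformation $(f,g)$ used there must be realized by a word $w\in\Sigma'^*$, and Lemma \ref{lemma-sat} will be used to identify $\delta^w(T)$ with the purely function-theoretic action $w(T)$. I would proceed by induction on the same well-founded order as in Lemma \ref{lemma-acc}: $T<T'$ iff $T$ has fewer crosses, or equally many crosses but fewer non-final crosses. The base case $T'=\{(0,0)\}$ is reached from the initial state $\emptyset$ by any letter of $\Sigma'$ that fixes $0$ on both coordinates, for instance $((n_1-2,n_1-1),\mathds{1})$.

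The first task is to check that $\Sigma'$ is expressive enough to realize every transformation used in Lemma \ref{lemma-acc}. The cycle $(0,\ldots,n_1-2)\in\Sigma'$ together with the transposition $(0,n_1-1)\in\Sigma'$ generates the full symmetric group $\mathfrak{S}_{n_1}$ on the first coordinate: conjugating $(0,n_1-1)$ by powers of the cycle yields all transpositions of the form $(k,n_1-1)$, which together generate $\mathfrak{S}_{n_1}$. An analogous argument using $(1,\ldots,n_2-1)$ and $(0,n_2-1)$ works for the second coordinate. Consequently every permutation-valued $(f,g)$ appearing in Lemma \ref{lemma-acc} -- the transpositions $((0,i),(0,j))$, $((0,n_1-1),\mathds{1})$ and $((i,n_1-1),\mathds{1})$ -- can be written as a word in $\Sigma'^*$ whose function-theoretic action is exactly $(f,g)$. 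The two non-injective letters $(\binom{n_1-1}{0},\mathds{1})$ and $(\binom{n_1-2}{n_1-1},\mathds{1})$ used in the collapse cases of Lemma \ref{lemma-acc} belong to $\Sigma'$ outright, and the symmetric collapses on the second coordinate are analogous.

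The main obstacle is the hypothesis of Lemma \ref{lemma-sat}: every prefix $w[1,k]$ applied to $T$ must either already contain $(0,0)$ or contain no cross in the final zone of $\mathrm{M}_{\{n_1-1\},\{0\}}$, otherwise the automaton action $\delta^w$ secretly inserts $(0,0)$ at some intermediate step and departs from the pure action $w$. I would handle this by a case analysis parallel to Lemma \ref{lemma-acc}: in every case there, the predecessor $T$ was chosen so that its set of final-zone crosses was either empty or exactly $\{(0,0)\}$, and I would arrange the letters of $w$ so that the same property holds at every intermediate step. In the permutation cases this can be done by first performing all row/column permutations that do not touch row $n_1-1$ or column $0$, and only at the very end bringing crosses into (or out of) the final zone, so that the first prefix which creates a non-$(0,0)$ final-zone cross is the last letter of $w$. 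In the collapse cases, where one row or column is merged into another, the single letter from $\Sigma'$ can be applied directly after pre-permuting the rows/columns to their canonical positions, and the invariant is maintained because $T$ already satisfies it.

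Combining these ingredients, induction on $<$ shows that every admissible tableau $T'$ is reachable in $\widehat B$: by the inductive hypothesis the predecessor $T$ is accessible, one constructs the word $w\in\Sigma'^*$ described above, Lemma \ref{lemma-sat} yields $\delta^w(T)=w(T)=(f,g)(T)=T'$, and hence $T'$ is accessible as well. Since by Lemma \ref{lemma-acc} the admissible tableaux exhaust the accessible states of $\mathrm{M}_{\{n_1-1\},\{0\}}$ and therefore of $\widehat B$, the proof concludes. The only delicate part of the argument is the intermediate-state bookkeeping required by Lemma \ref{lemma-sat}; the rest is a routine translation of Lemma \ref{lemma-acc} into the restricted alphabet.
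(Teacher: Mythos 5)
Your overall architecture does match the paper's: the same induction as in Lemma \ref{lemma-acc}, each transformation realized by a word over $\Sigma'$, and Lemma \ref{lemma-sat} used to relate the automaton action $\delta^w$ to the pure action $w$. However, there is a genuine gap exactly at the point you label ``the only delicate part''. Your proposed invariant --- perform all permutations avoiding row $n_1-1$ and column $0$ first, and let only the last letter of $w$ create a final-zone cross --- cannot be maintained in the hardest case, namely when $T'$ has a final-zone cross at $(i,j)$ with $j\neq 0$. The cross being installed is itself headed for the final zone, and the only letters of $\Sigma'$ that can carry it there are iterated cycles such as $((1,\ldots,n_1-1),\mathds{1})$, which drag it through intermediate positions, some of which lie in the final zone. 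The paper does not avoid this; it lets the automaton insert $(0,0)$ at the first such step $k=\min\{l\mid\delta^{w_1[1,l]}(T'')_{0,0}=1\}$ and then checks that all remaining letters fix row $0$ and column $0$, so the inserted cross survives and accounts for the $(0,0)$ cross of $T'$. Your sketch has no counterpart for this two-sub-case analysis. Relatedly, your concluding identity $\delta^w(T)=w(T)=(f,g)(T)=T'$ is false in precisely that case: there the predecessor must be built from $T'$ with the $(0,0)$ cross \emph{deleted}, and one needs $\delta^w(T)=w(T)\cup\{(0,0)\}\neq w(T)$; the deviation of $\delta^w$ from the pure action is not an obstacle to be excluded but the mechanism that produces $T'$.

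A secondary issue is strategic. Your observation that $\Sigma'$ generates the full symmetric group on each coordinate is correct, but decomposing the exact transformations $((0,i),(0,j))$ of Lemma \ref{lemma-acc} into products of conjugates of $(0,n_1-1)$ yields generic words for which the prefix condition of Lemma \ref{lemma-sat} is essentially uncheckable (intermediate letters may move the $(0,0)$ cross away while other final-zone crosses are present). The paper instead replaces both the transformation and the predecessor by tailor-made ones --- e.g.\ $w=(\mathds{1},(0,1))((0,\ldots,n_1-2),\mathds{1})^i(\mathds{1},(1,\ldots,n_2-1))^{j-1}$ with $T=w^{-1}(T')$ --- chosen so that the prefix condition can be verified letter by letter. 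Until you exhibit concrete words for each case and verify both the prefix condition and the timing of the $(0,0)$ insertion, the argument is a plan rather than a proof.
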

        \begin{proof}
        As the induction is the same as in Lemma \ref{lemma-acc}, we  only focus on cases of this previous lemma where the letters used are not in $\Sigma'$.
       \begin{itemize}
            \item If $T'$ has no cross in the final zone, according the previous remark, we have only to examine the case where $T'_{n_1-1,0}=0$. Let $(i,j)$ be the index of a cross of $T'$. Let $w=((\mathds{1},(0,1))((0,\ldots, n_1-2),\mathds{1})^i(\mathds{1},(1,\ldots, n_2-1))^{j-1}$ and let $T=w^{-1}(T')$. We have $T_{0,0}=1$ and for all $1<k<|w|$, for all $(i,j)\neq(n_1-1,0)$ such that $i=n_1-1$ and $j=0$, $(w[1, k](T))_{i,j}=(w[k,|w|-1]^{-1}(T'))_{i,j}=0$. Thus, by Lemma \ref{lemma-sat}, $\delta^{w}(T)=T'$. We also have $T<T'$ since $\#T=\#T'$ and $\#_\mathrm{nf}T<\#_\mathrm{nf}T'$.      
          \item $T'$ has a cross in the final zone other than $(0,0)$. Let $(i,j)$ be  such a cross. We distinguish two cases. 
          \begin{itemize}
          \item If $j=0$, we consider the word $w_1=((1,\ldots, n_1-2),\mathds{1})^i$. Let $T''$ be the tableau obtained from $w_1^{-1}(T')$ by deleting the cross at $(0,0)$ and let $w_2=((0,1),\mathds{1})$ and $T=w_2(T'')$. It is easy to see that $T'=w_1(\delta^{w_2}(T))$. By Lemma  \ref{lemma-sat}, we have $T'=w_1(\delta^{w_2}(T))=\delta^{w_2w_1}(T)$ in $\mathrm{\widehat M}_{\{n_{1}-1\},\{0\}}$ and $T<T'$ as $\#T<\#T'$.
          \item Otherwise define $w_1=((1,\ldots, n_1-1),\mathds{1})^{i-1}(\mathds{1},(1,\ldots, n_2-1))^{j-1}$. Let $T''$ be tableau obtained from $w_1^{-1}(T')$ by deleting the cross at $(0,0)$. It means that $T'=w_1(T'')\cup \{(0,0)\}$. Let $w_2=((0,1),(0,1))$ and $T=w_2(T'')$. Then we have $\delta^{w_2}(T)=T''\cup \{(0,0)\}$ or $\delta^{w_2}(T)=T''$.
            \begin{itemize}
            \item If $\delta^{w_2}(T)=T''\cup \{(0,0)\}$, then by Lemma \ref{lemma-sat}, as $w_1$ does not change the first line and the first column we have $\delta^{w_2w_1}(T)=\delta^{w_1}(\delta^{w_2}(T))=w_1(\delta^{w_2}(T))=w_1((T''\cup \{0,0)\})=w_1(T'')\cup \{(0,0)\}=T'$.\\
              \item If $\delta^{w_2}(T)=T''$, then we set $k=\min\{l<|w| \mid \delta^{(w_1[1,l]}(T'')_{0,0}=1\}$. For all integer $l<k$, the tableau $\delta^{w_1[1,l]}(T'')$ has no cross in the final zone, and we can apply Lemma \ref{lemma-sat}, and $w_2(w_1[1,l](T))=\delta^{w_2w_1}[1,l](T)$. Furthermore $\delta^{w_2w_1[1,k]}(T)=(w_1[1,k])(\delta^{w_2}(T))\cup \{(0,0)\}$ and as letters of $w_1[k+1,|w_1|]$ do not change the first line and the first column, we have $\delta^{w_2w_1}(T)=w_1[k+1,|w_1|](w_1[1,k](\delta^{w_2}(T))\cup \{(0,0)\})=w_1(\delta^{w_2}(T))\cup \{(0,0)\}=w_1(T'')\cup \{(0,0)\}=T'$.  Moreover, as  $\#T<\#T'$, we have $T<T'$.
            \end{itemize}
          \end{itemize}
        \item The only cross of $T'$ which is in the final zone is $(0,0)$. According to the first sentence of the proof, we have only to consider the case where there does not exist $j$ such that $T'_{0,j}=1$ and $T'_{n_1-1,0}=0$. It follows that there exists $(i,j)\neq (n_1-1,0)$ such that $i\neq 0$ and $T'_{i,j}=1$. Let $w=((1,\ldots, n_1-1),\mathds{1})^i$ and let $T=w^{-1}(T')$. By Lemma \ref{lemma-sat}, as for each proper prefix $w'$ of $w$, $(w'(T))_{0,0}=1$, we have $\delta^{w}(T)=T'$ in $B$, and $T<T'$ as $\#_\mathrm{nf}T < \#_\mathrm{nf}T'$.    
       \end{itemize}
        \end{proof}
        Similarly, the following lemma is obtained by simulating with letters in $\Sigma'$ the transition functions used in Lemma \ref{lemma-dist}.
        \begin{lemma}\label{lemma-dist-fin}
          All states of $A$ are pairwise distinguishable.
        \end{lemma}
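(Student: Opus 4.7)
The plan is to mirror the distinguishing argument from Lemma \ref{lemma-dist}, but to realize the single distinguishing letter $(f,g)$ used there as a word $w$ over the restricted alphabet $\Sigma'$. By Proposition \ref{prop-undistinguishable} and Lemma \ref{lemma-acc-fin}, the states of $A$ correspond to distinct right-triangle free tableaux $T$ satisfying either $T\cap\text{final zone}=\emptyset$ or $T_{0,0}=1$. Given two such representatives $T\neq T'$, the first step is identical to Lemma \ref{lemma-dist}: pick $(i,j)$ with $T_{i,j}\neq T'_{i,j}$, say $T_{i,j}=1$, and define $\{i_1,\dots,i_\ell\}=\{\alpha\mid T'_{\alpha,j}=1\}$ and $\{j_1,\dots,j_p\}=\{j\}\cup\{\beta\mid T'_{i_1,\beta}=1\}$. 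The target composite action is the same $(f,g)$ as in Lemma \ref{lemma-dist}: send $\{i_1,\dots,i_\ell\}$ to $n_1-1$ and every other row to $0$; send $\{j_1,\dots,j_p\}$ to $0$ and every other column to $n_2-1$.

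The construction of a word $w\in\Sigma'^{*}$ realizing this action splits into an action on rows (with second component $\mathds{1}$) followed by an action on columns (with first component $\mathds{1}$); the two are symmetric, so I only describe the row part. The cyclic permutations $((1,\dots,n_1-1),\mathds{1})$ and $((0,1),\mathds{1})$ generate the full symmetric group on $\IntEnt{n_1}$, so any rearrangement of rows is available. Composing such permutations with the collapse $(\binom{1}{0},\mathds{1})$ lets me merge any specified row into row $0$; composing with $(\binom{n_1-2}{n_1-1},\mathds{1})$, $((n_1-2,n_1-1),\mathds{1})$, and $((0,n_1-1),\mathds{1})$ lets me merge any specified row into row $n_1-1$. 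Applying these collapses iteratively — first bringing every row of $\{i_1,\dots,i_\ell\}$ into position $n_1-1$ and collapsing, then dealing with the complement via row $0$ — produces a word $w_{\mathrm{row}}$ whose functional composition on tableaux equals $(f,\mathds{1})$. Prepending the analogously built $w_{\mathrm{col}}$ yields a word $w$ whose pointwise action on tableaux is exactly $(f,g)$.

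The main obstacle is that the transition $\delta^w$ of $B$ is not literally the functional composition $w$: whenever an intermediate prefix $w[1,k](T)$ enters the final zone without a cross at $(0,0)$, the transition function of $\mathfrak{StX}$ inserts a spurious cross at $(0,0)$, which can pollute subsequent steps. Lemma \ref{lemma-sat} is the tool to avoid this: it suffices to order the elementary operations so that for every prefix $w[1,k]$, either $w[1,k](T)$ already carries a cross at $(0,0)$ or it has no cross in the final zone, and similarly for $T'$. For $T$ this is easy: schedule the operations so that the cross originally at $(i,j)$ is the first one sent into the final zone and it arrives at $(0,0)$, remaining there thereafter. For $T'$, the key point — already established inside the proof of Lemma \ref{lemma-dist} — is that $T'_{i',j'}=0$ for every $(i',j')$ in $(\{i_1,\dots,i_\ell\}\times(\IntEnt{n_2}\setminus\{j_1,\dots,j_p\}))\cup((\IntEnt{n_1}\setminus\{i_1,\dots,i_\ell\})\times\{j_1,\dots,j_p\})$, so no intermediate image of $T'$ ever lands in the final zone except potentially at the very end, after the last collapse.

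Once the hypotheses of Lemma \ref{lemma-sat} are verified along $w$ for both $T$ and $T'$, we obtain $\delta^w(T)=w(T)=(f,g)(T)$ and $\delta^w(T')=w(T')=(f,g)(T')$. The concluding computation of Lemma \ref{lemma-dist} then applies verbatim: $(f,g)(T)_{0,0}=1$, making $\delta^w(T)$ final, while $(f,g)(T')$ has at most crosses at $(n_1-1,0)$ and $(0,n_2-1)$, neither of which lies in the final zone, so $\delta^w(T')$ is not final. Hence $w\in\Sigma'^{*}$ distinguishes $T$ from $T'$, and the two states are distinguishable in $A$.
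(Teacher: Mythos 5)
Your overall strategy coincides with the paper's: the paper dispatches this lemma in one sentence (``obtained by simulating with letters in $\Sigma'$ the transition functions used in Lemma \ref{lemma-dist}''), and you correctly identify the target map $(f,g)$, the need to generate it from $\Sigma'$ (the generation argument via $(0,1)$, the long cycles and the contractions is fine), and the role of Lemma \ref{lemma-sat} in controlling the spurious crosses that $\mathfrak{StX}$ inserts at $(0,0)$.

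The gap is the sentence ``so no intermediate image of $T'$ ever lands in the final zone except potentially at the very end''. The vanishing of $T'$ on $(\{i_1,\dots,i_\ell\}\times J^c)\cup(I^c\times\{j_1,\dots,j_p\})$ controls only the \emph{final} image $(f,g)(T')$, not the images under proper prefixes of $w$, and for the scheduling you describe the claim is false. Concretely, take $n_1=n_2=4$, $T=\{(1,2)\}$, $T'=\{(1,1),(2,2)\}$, so $(i,j)=(1,2)$, $I=\{2\}$, $J=\{2\}$. The cross of $T'$ at $(2,2)\in I\times J$ must travel to $(n_1-1,0)=(3,0)$; since every letter of $\Sigma'$ except $((0,1),(0,1))$ moves only one coordinate, this cross necessarily occupies, at some intermediate step, a cell of the form $(3,y)$ with $y\neq 0$ or $(x,0)$ with $x\neq 3$ --- i.e.\ a cell of the final zone --- while the only other cross of $T'$ sits in row $1$ or $2$, so no cross is at $(0,0)$. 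The hypothesis of Lemma \ref{lemma-sat} therefore fails, $B$ inserts a cross at $(0,0)$, and (in the column-first order, since $f(0)=g(0)=0$ here) that spurious cross can survive to the end, making $\delta^{w}(T')$ final and the word useless for distinguishing $T$ from $T'$. So the hypotheses of Lemma \ref{lemma-sat} are not merely ``to be verified'': for some pairs they cannot all hold along any word realizing $(f,g)$, and one must instead either track the spurious crosses and show they land outside the final zone, or pick a different distinguishing word. This is exactly the case-by-case bookkeeping the paper performs explicitly in the proof of Lemma \ref{lemma-acc-fin} and leaves implicit here; without it your argument is incomplete at the one point where the restricted alphabet actually creates a difficulty.
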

        Lemma \ref{lemma-acc-fin} and \ref{lemma-dist-fin} imply that $A$ is minimal and that the following theorem holds.
        \begin{theorem}
          The couple $(B_1,B_2)$ is a witness for the operation $\ostar$.
        \end{theorem}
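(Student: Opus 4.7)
The plan is to match the minimal DFA of $\mathrm L(B_1)\ostar\mathrm L(B_2)$ against the one obtained for the full monster version, thereby reaching the state complexity $\sc_{\ostar}(n_1,n_2)=2\alpha_{n_1-1,n_2-1}+\alpha'_{n_1,n_2}$ computed in the previous section. A preliminary check is that $\sc(\mathrm L(B_j))=n_j$ for $j\in\{1,2\}$. Since each $B_j$ shares its state set with the minimal monster $\mathds M_j$, it suffices to verify that the letters of $\Sigma'$ make every state accessible from $0$ and pairwise distinguishable in $B_j$; the cycle $((0,\ldots,n_1-2),\mathds{1})$ together with the transposition $((0,n_1-1),\mathds{1})$ already reach every state of $B_1$, with distinguishability provided by the unique final state $n_1-1$, and a symmetric check works for $B_2$ using $(\mathds{1},(1,\ldots,n_2-1))$ and $(\mathds{1},(0,n_2-1))$.

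For the main point I would use the fact, recorded at the start of Section \ref{sec-sc}, that $\mathfrak{StX}$ is a modifier associated with $\ostar$: the DFA $B=\mathfrak{StX}(B_1,B_2)$ recognizes $\mathrm L(B_1)\ostar\mathrm L(B_2)$, so it only remains to compute the size of its minimal DFA. Remark \ref{remark-acc} confines the accessible part of $B$ inside $\widehat B$, and Lemma \ref{lemma-acc-fin} supplies the reverse inclusion, so $\widehat B$ is exactly the accessible part. The implication $T\overset{*}{\leftrightarrow}T'\Rightarrow T\sim_{Ner}T'$ from Proposition \ref{prop-undistinguishable} transfers to $\widehat B$ because its proof rests on Lemmas \ref{lm-tool} and \ref{lm-tool2}, which are statements about individual letters of the full monster alphabet, and every letter of $\Sigma'$ is such a letter. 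Lemma \ref{lemma-dist-fin} then furnishes the converse at the level of $\overset{*}{\leftrightarrow}$-classes, so $A=\widehat B_{/\overset{*}{\leftrightarrow}}$ is minimal.

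The counting in Lemma \ref{lemma-comp} depends only on the shape of the accessible tableaux and on the equivalence $\overset{*}{\leftrightarrow}$, not on which alphabet realises the transitions, so $\#A=2\alpha_{n_1-1,n_2-1}+\alpha'_{n_1,n_2}=\sc_{\ostar}(n_1,n_2)$, which is exactly what being a witness requires. The substantive obstacle has already been absorbed into Lemmas \ref{lemma-acc-fin} and \ref{lemma-dist-fin}: every tableau transformation used in the proofs of Lemmas \ref{lemma-acc} and \ref{lemma-dist} must now be simulated by a composition of the $17$ letters of $\Sigma'$, and one must guarantee along the way that intermediate tableaux never spuriously trigger the $(0,0)$-insertion built into $\mathfrak{Star}$, which is precisely the role of the technical Lemma \ref{lemma-sat}. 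Granting those two lemmas, the theorem reduces to assembling the inclusions above and reading off the count.
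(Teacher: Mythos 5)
Your proposal is correct and follows essentially the same route as the paper: the theorem is obtained by combining Lemma \ref{lemma-acc-fin} (accessibility of $\widehat B$) with Lemma \ref{lemma-dist-fin} (pairwise distinguishability of the $\overset{*}{\leftrightarrow}$-classes) to conclude that $A$ is minimal and has the same size $2\alpha_{n_1-1,n_2-1}+\alpha'_{n_1,n_2}$ as the minimal DFA of the full monster construction. Your additional observations --- the explicit check that $\mathrm{sc}(\mathrm L(B_j))=n_j$, and the remark that Proposition \ref{prop-undistinguishable} transfers to the restricted alphabet because Lemmas \ref{lm-tool} and \ref{lm-tool2} are per-letter statements --- are correct and in fact make explicit steps the paper leaves implicit.
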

    \section{Conclusion}
        We have  given the state complexity of the star of symmetrical difference and have provided a witness with a constant alphabet size. We know that 
        the bounded size of the alphabet that we exhibit is not optimal, 
        but it simplifies the proof given. 
        Moreover, proving the optimality of a bound seems out of reach for now and would necessitate to introduce new  tools.
        
        One of our future works will be to generalize the method used here to a whole well-defined class of operations, in order to provide a witness with bounded alphabet size for all of them.
        {\bf Acknowlegedments }
        This work is partially supported by the projects MOUSTIC ( ERDF/GRR) and ARTIQ (ERDF/RIN)
        \bibliography{biblio}
\end{document}